\title{Optimizing Symbol Visibility through Displacement}
\author{Bernd Gärtner}{Department of Computer Science, ETH Zürich, Switzerland}{gaertner@inf.ethz.ch}{}{}
\author{Vishwas Kalani}{Department of Computer Science and Engineering, I.I.T. Delhi, India}{cs1200411@cse.iitd.ac.in}{}{}
\author{Meghana M. Reddy\footnote{The third author's full last name consists of two words and is \emph{Mallik Reddy}. However, she consistently refers to herself with the first word of her last name being abbreviated.}}{Department of Computer Science, ETH Zürich, Switzerland}{meghana.mreddy@inf.ethz.ch}{https://orcid.org/0000-0001-9185-1246}{Supported by the Swiss National Science Foundation within the collaborative DACH project \emph{Arrangements and Drawings} as SNSF Project 200021E-171681.}
\author{Wouter Meulemans}{Department of Mathematics and Computer Science, TU Eindhoven, the Netherlands}{w.meulemans@tue.nl}{https://orcid.org/0000-0002-4978-3400}{Partially supported by the Dutch Research Council (NWO) under project number VI.Vidi.223.137.}
\author{Bettina Speckmann}{Department of Mathematics and Computer Science, TU Eindhoven, the Netherlands}{b.speckmann@tue.nl}{https://orcid.org/0000-0002-8514-7858}{}
\author{Miloš Stojaković}{Department of Mathematics and Informatics, Faculty of Sciences, University of Novi Sad, Serbia}{milos.stojakovic@dmi.uns.ac.rs}{https://orcid.org/0000-0002-2545-8849}{Partly supported by Ministry of Science,
Technological Development and Innovation of Republic of Serbia
(Grants 451-03-66/2024-03/200125 \& 451-03-65/2024-03/200125). Partly supported by Provincial Secretariat for Higher Education and Scientific Research, Province of Vojvodina (Grant No.~142-451-2686/2021).}
\authorrunning{B. Gärtner, V. Kalani, W. Meulemans, M. M. Reddy, B. Speckmann, \& M. Stojaković} 
\keywords{symbol placement, visibility, jittering, stacking order} 
\newcommand{\A}{\mathcal{A}}
\newcommand{\xx}{\mathbf{x}}
\newcommand{\yy}{\mathbf{y}}
\begin{document}
	
\maketitle

\begin{abstract}
In information visualization, the position of symbols often encodes associated data values.
When visualizing data elements with both a numerical and a categorical dimension, positioning in the categorical axis admits some flexibility.
This flexibility can be exploited to reduce symbol overlap, and thereby increase legibility.
In this paper, we initialize the algorithmic study of optimizing symbol legibility via a
limited displacement of the symbols.

Specifically, we consider closed unit square symbols that need to be placed at specified $y$-coordinates.  We optimize the drawing order of the symbols as well as their $x$-displacement, constrained within a rectangular container, to maximize the minimum visible perimeter over all squares. If the container has width and height at most $2$, there is a point that stabs all squares. In this case, we prove that a staircase layout is arbitrarily close to optimality and can be computed in $O(n\log n)$ time.
If the width is at most $2$, there is a vertical line that stabs all squares, and in this case, we design a 2-approximation algorithm (assuming fixed container height) that runs in $O(n\log n)$ time.
As it turns out that a minimum visible perimeter of 2 is always achievable with a generic construction, we measure this approximation with respect to the visible perimeter exceeding 2.  We show that, despite its simplicity, the algorithm gives asymptotically optimal results for certain instances.
\end{abstract}

\section{Introduction}
\label{sec:intro}

When communicating information visually, the position of symbols is an important visual channel to encode properties of the data. For example, in a scatter plot that visualizes age versus income of a given population, each data item (a person in the population) is visualized with a symbol (commonly a square, a cross, or a circle) which is placed at an $x$-coordinate that corresponds to their age and a $y$-coordinate that corresponds to their income. Hence persons with similar values are placed in close proximity, which allows the user to visually detect patterns. Another example from cartography are so-called proportional symbols maps which visualize numerical data associated with point locations by placing a scaled symbol (typically an opaque disc or square) at the corresponding point on the map. The size of the symbol is proportional to the data value of its location, such as the magnitude of an earthquake. The density and size of the symbols again supports visual pattern detection.

In both examples above, the position of the symbol is fixed and cannot be changed without severely distorting the information it encodes. In other settings, such as map labeling, the position of a symbol is not completely fixed, instead the symbol (the label) needs to be placed in contact with a particular point on the map. There are infinitely many potential placements for the symbol, but all must contain the same fixed point somewhere on its boundary. In this paper, we consider a related symbol placement problem, which is motivated by the visualization of numerical data with associated categories: the age of employees  within a certain division of a company or the page rank of posts that exhibit a certain sentiment (positive, negative, neutral) on a topic such as vaccinations. Such data can be visualized in \emph{categorical strips} of fixed width $w$, restricting the symbols to lie in the strip, and placing the symbols on a $y$-coordinate according to their numerical values (see Figure~\ref{fig:vac2018} for an example using Twitter data). There are again infinitely many potential placements per symbol, but all placements are restricted to share the same $y$-coordinate.

If the positions of symbols are fixed or restricted, then close symbols will overlap, reducing the visible part of -- or even fully obscuring -- other symbols. Correspondingly, there is an ample body of work on optimizing the visibility of symbols under placement restrictions. The algorithmic literature considers a couple of variants. First of all, we either display all symbols or only a subset. For symbol maps and our categorical strips we always have to display all symbols, since otherwise not all data is visualized. For map labeling one usually chooses a subset of the labels which can be placed without any overlap; the corresponding optimization problems attempt to maximize the number of these labels while also taking priorities (such as city sizes) into account. If overlap between symbols is unavoidable, visibility is optimized by either maximizing the minimum perimeter of the symbol that has least visibility or maximizing the total visible perimeter. If the positions of the symbols are completely fixed, then the only choice we can make is the drawing order of the symbols.

\begin{figure}[p]
\centering
    \includegraphics[page=1, width=\textwidth]{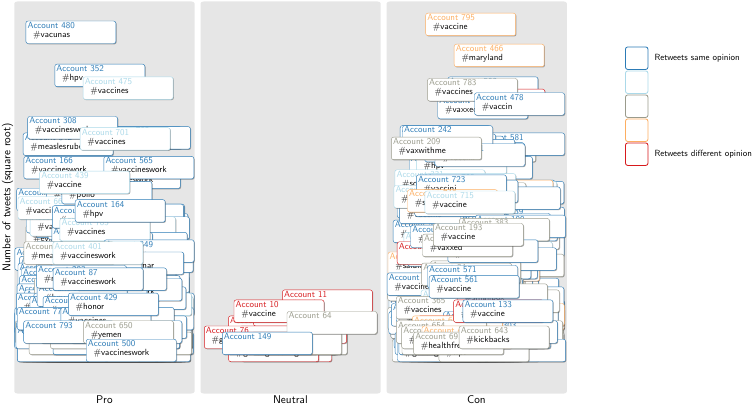} \\ \vspace{0.5cm}
    \includegraphics[page=2, width=\textwidth]{vax2018-2.pdf}
\caption{A pro and con vaccination discussion on Twitter from $2018$, capturing $823$ accounts. Accounts are placed in a categorical strip according to the most frequent sentiment expressed. Color indicates if an account mostly mentions accounts with a similar opinion (blue) or with a different opinion (red). Top: random jittering, bottom: our approximation algorithm described in Theorem~\ref{thm:main_approx}. Random jittering hides many details, such as the pro-vaccination accounts that mention predominantly anti-vaccination accounts (red boundaries). Many accounts at the bottom sent the same number of posts and hence share a $y$-coordinate, which inevitably covers most of their horizontal edges.}
\label{fig:vac2018}
\end{figure}
\afterpage{\clearpage}

In this paper, we study the novel algorithmic question of how to optimize the visibility of a given set of symbols, all of which must be drawn, when we may choose their drawing order and their $x$-coordinate, given a set of fixed (and distinct) $y$-coordinates for each symbol.
We measure the visibility of the result via the minimum visibility perimeter over all symbols~\cite{cabello2010algorithmic}.
The symbols model data of the same category to be used inside a visualization with multiple categories. Hence the $x$-coordinates are restricted to lie in a small range, to ensure visual cohesion. Specifically, we study the setting where all symbols overlap horizontally. Symbols with identical $y$-coordinates (data values) require special care, both from a visual aggregation and an algorithmic point of view. In this first study, we focus on the general case of distinct $y$-coordinates.
Figure~\ref{fig:vac2018} shows that our algorithms do indeed greatly improve the visible perimeter and thereby give the viewer a more accurate impression of the data. Note that our theoretical results hold only for square symbols, but, as evidenced by Figure~\ref{fig:vac2018}, the algorithms we propose readily extend to rectangular symbols and not necessarily distinct $y$-coordinates. Proving similar bounds for more general symbol shapes is a challenging open problem.

\subparagraph*{Contributions and organization.}
We initiate the algorithmic study of optimizing symbol visibility through displacement. Specifically, we focus on unit square symbols, that may be shifted horizontally while remaining in a strip of width 2 (their categorical strip). In Section~\ref{sec:prelims}, we introduce our notation and make some initial observations. Most notably: the visible perimeter behaves non-continuously when squares are placed on the same $x$-coordinate. Hence the optimal visible perimeter is a supremum that cannot always be reached.
In Section~\ref{sec:2x2case} we study the special case that the strip has height at most 2. In this scenario all squares are stabbed by a point. We first establish several useful geometric properties of so-called reasonable layouts -- arrangements of the input squares which meet certain lower bound conditions -- and then use these properties to prove that a simple $O(n \log n)$ algorithm suffices to compute a layout of the squares whose visible perimeter is arbitrarily close to the supremum.

In Section~\ref{sec:slabcase} we then study the general case of strips of arbitrary height (but still width~2). Here all squares are stabbed by a line.
We leverage our previous result to obtain an $O(n \log n)$-time approximation algorithm. This approximation is with respect to the so-called gap -- the visible perimeter minus two -- since a minimum visible perimeter of 2 is trivially obtained for any instance. Furthermore, if the $y$-coordinates are uniformly distributed, then we can show that a specific layout -- the zigzag layout -- is asymptotically optimal.
We close with a discussion of various avenues for future work, both towards the practical applicability of our results in visualization systems and towards more theoretical results in other settings, including different visibility definitions and other symbol shapes.

\subparagraph*{Related work.}
The questions we study in this paper combine various aspects and restrictions of well-known placement problems in the algorithmic (geo-)visualization literature in a novel way.
Most closely related to our work are the two papers by Cabello et al.~\cite{cabello2010algorithmic} and by Nivasch et al.~\cite{NivaschPT14} that consider perimeter problems for sets of differently sized symbols at fixed positions in the plane. Specifically, they consider the problems of
maximizing the perimeter of the symbol that has least visibility (as we do in this paper), or maximizing the total visible perimeter. Since the locations of the symbols are fixed, the algorithmic problem reduces to finding the optimal (not necessarily stacking) order of the symbols. This contrasts with our work where a limited form of displacement is allowed.

Labeling cartographic maps, where a subset of the labels are chosen such that they can be placed without any overlap, and the corresponding optimization problems are computationally complex in various settings~\cite{formann1991packing}, as they relate to maximum independent set. Constrained displacement of labels is often also allowed, and various of such placement models have been studied algorithmically~\cite{10.1145/3603376,depian_et_al:LIPIcs.GIScience.2023.2,Poon2004,Schwartges2014,VANKREVELD199921}.
The goal is generally to place as many labels as possible without overlap, although some variants exist that allow labels to be scaled until all labels can be placed. By design, our symbols overlap horizontally and often also vertically, and we cannot omit symbols; furthermore, scaling symbols is not an option while retaining legibility of the visualization.
Displacing labels to the boundary of a map has attracted considerable algorithmic attention under different models as well, see the paper~\cite{bekos2019external} for a survey. However, such practice relies on leader lines to connect labels to the points being labeled; making it harder to identify data patterns and thus less suitable for visualizing data.

There is ample work on using symbol displacement to eliminate all symbol overlap, as it has various applications in visualization, including visualizing disjoint glyphs in geovisualization \cite{hirono2013constrained,meulemans2019efficient,mereke2017}, removing overlap between vertices in graph drawing \cite{dwyer2005fast,marriott2003removing}, positioning nonspatial data \cite{gomez2013mixed,wordles2012} and computing Dorling and Demer's cartograms \cite{dorling,nickel2022multicriteria}.
Such overlap-removal problems are NP-hard in many settings \cite{FIALA2005306,merekethesis}, though efficiently solvable in some specific settings \cite{meulemans2019efficient,nickel2022multicriteria}. However, eliminating all overlap may require considerable displacement, thereby distorting the view of the data. Our goal is not to eliminate all overlap, but to use both a limited displacement and a suitable drawing order to maximize visibility.

In the visualization literature, offsetting graphical symbols to improve visibility is known as jittering \cite{wilke2019fundamentals}. Often, jittering is done randomly, though in context of dense plots, arising, for example, from dimensionality reduction, more complex algorithms have been engineered for this task; see e.g. \cite{giovannangeli2023guaranteed} for a recent method. However, such approaches are often heuristic in nature, without provable quality guarantees.

\section{Preliminaries}
\label{sec:prelims}

Our input is a sequence of distinct $y$-coordinates $\yy=(y_1,y_2,\ldots,y_n)$. We would like to find $x$-coordinates $\xx=(x_1,x_2,\ldots,x_n)$, determining unit squares $s_1,s_2,\ldots,s_n$, where $s_i$ has centroid $(x_i,y_i)$. We also want to find a stacking order, so that the minimum visible perimeter among all squares is maximized.

More specifically, we are given a \emph{strip} $T$ of width $1<w \leq 2$ and height $h> 1$, where we assume that $T=[0,w]\times[0,h]$, and that $\frac12\leq y_1 < y_2<\ldots <y_n\leq h-\frac12$.
Thus, in terms of the $y$-coordinates, we can speak about the \emph{highest}, or the \emph{lowest}, among any set of squares. Furthermore, we say that $s_i$ is \emph{above} (\emph{below}) $s_j$, if $y_i>y_j$ ($y_i<y_j$).
To ensure visual cohesion between the squares, we require $x_i\in [\frac12,w-\frac12]$ for all $i$, so that all squares are in the strip. We say that $s_i$ is \emph{left} (\emph{right}) of $s_j$, if $x_i<x_j$ ($x_i>x_j$). This determines a \emph{leftmost} and a \emph{rightmost} square among any set of squares (ties handled as needed).

A \emph{stacking order} is a total order $\prec$ among the squares. If $i\prec j$,
we say that $s_j$ is \emph{in front of} $s_i$, and $s_i$ is \emph{behind} $s_j$. The \emph{bottom} and \emph{top square} are the first and last squares in the order.
The pair $(\xx,\prec)$ is a \emph{layout} for the instance $(w,h,\yy)$.

The \emph{visible perimeter} of a square $s$ in a layout is the total length of all its \emph{visible} boundary, where a point on the boundary is visible if any other square $t$ containing it is behind $s$. The top square has visible perimeter $4$. The visible boundary is made up of (horizontal or vertical) \emph{visible edges}. Note that each side of $s$ can host at most one visible edge, due to all squares having the same size.

The \emph {gap} of a square in a layout is its visible perimeter minus $2$. If this is non-positive, we say that the square has no gap.
The gap of a layout is the minimum of the gaps of all squares.
This definition is motivated by the fact that we can always achieve a positive gap by suitable ``standard'' layouts which we introduce below.
Ideally, we want to find an optimal layout, one that has the maximum gap among all layouts.
However, this may not exist: one can easily construct instances where the only candidates for optimal layouts have duplicate $x$-coordinates, but no such layouts can actually be optimal.

One example of such an instance has three squares, with $y_2-y_1=2\varepsilon, y_3-y_2=\epsilon$ and a strip of width $w=1+\varepsilon$, where $\varepsilon\leq 0.2$, see~Figure~\ref{fig:no_opt}.
To see that there is no optimal layout in this case, from only looking at the two highest squares $s_2,s_3$, we see that the gap of every layout is bounded from above by $2\varepsilon$, and to achieve this gap, $s_2$ and $s_3$ together need to span the full strip width. Given this, $s_1$ cannot be the bottom square, since then it has visible perimeter at most $1+5\varepsilon\leq 2$ and hence no gap (Figure~\ref{fig:no_opt}, left); $s_2$ cannot be the bottom square, either, as this would limit its visible perimeter to at most $1+2\varepsilon$ (Figure~\ref{fig:no_opt}, middle). So $s_3$ has to be the bottom square (Figure~\ref{fig:no_opt}, right). Since $x_1=x_2$ would give the bottom square among $s_1,s_2$ a visible perimeter of at most $1+4\varepsilon$, we need to have $x_1\neq x_2$, but this means that $s_3$ has gap less than $2\varepsilon$. Hence, a gap of $2\varepsilon$ is not achievable, but any smaller gap is (by the right layout, as $x_1-x_2\rightarrow 0$).

Therefore, we take the supremum gap over all layouts as the benchmark which we want to approximate as closely as possible.

\begin{figure}[t]
     \centering
     \includegraphics[width=0.55\textwidth]{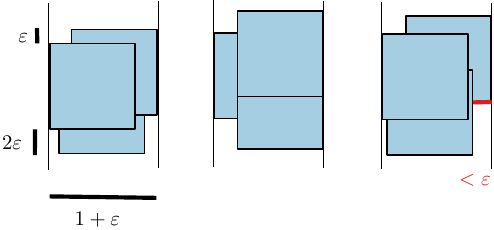}
     \caption{An instance with $y_2-y_1=2\varepsilon, y_3-y_2=\epsilon$, and a strip of width $w=1+\varepsilon$, where $\varepsilon\leq 0.2$, has no optimal layout.}
     \label{fig:no_opt}
\end{figure}

The \emph{bounding box} of a collection of squares is the inclusion-minimal axis-parallel box containing all the squares.

We call a layout a \emph{staircase} if both $\xx$ and $\prec$ are monotone:
\begin{itemize}
\item $x_1\leq x_2\leq \cdots\leq x_n$ (``facing right''), or $x_1\geq x_2\geq \cdots\geq  x_n$ (``facing left''); and
\item $s_1\prec s_2\prec \cdots \prec s_n$ (``facing up''), or $s_1\succ s_2\succ \cdots \succ s_n$ (``facing down'').
\end{itemize}
Hence, there are 4 types of staircases; the one in Figure~\ref{fig:staircase_genstaircase} (left) is facing right and up.

We call a layout a \emph{generalized staircase} if each square $s_i$ lies in one of the four corners of the bounding box of $s_i$ and all squares in front of it. In a standard staircase, this corner is the same for all squares (i.e., the lower left corner for a staircase facing right and up).

\begin{lemma}\label{obs:positive_gap}
For every instance, there is a staircase with positive gap.
\end{lemma}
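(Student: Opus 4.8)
The plan is to construct an explicit staircase that is "facing right and up": set $s_1 \prec s_2 \prec \cdots \prec s_n$, so square $s_i$ lies in front of all lower squares, and choose $x$-coordinates that are weakly increasing, $x_1 \le x_2 \le \cdots \le x_n$. The key design decision is how far apart consecutive $x$-coordinates should be. Since the strip has width $w > 1$, there is a budget of $w - 1 > 0$ available for horizontal displacement; the idea is to spread the $x$-coordinates so that consecutive squares differ by a common small amount $\delta > 0$ (with $(n-1)\delta \le w-1$, e.g. $\delta = (w-1)/(n-1)$), while keeping all $x_i \in [\tfrac12, w - \tfrac12]$.

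First I would analyze the visible perimeter of a single square $s_i$ in this layout. The only squares that can cover part of $s_i$ are those in front of it, namely $s_{i+1}, \dots, s_n$. Because the $x$-coordinates are weakly increasing, every such square sits weakly to the right of $s_i$; hence $s_i$'s entire left vertical edge is unobstructed, contributing a full visible segment of length $1$. For the remaining three sides I would argue that $s_i$ loses only a bounded fraction: the nearest square in front, $s_{i+1}$, is displaced to the right by $\delta$ and (since the $y$-coordinates are distinct) is vertically offset, so it covers at most a $(1-\delta) \times (\text{something} < 1)$ sub-rectangle of $s_i$; squares further ahead are displaced even more and cannot cover anything that $s_{i+1}$ does not already potentially cover on the right side. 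Carrying this through, $s_i$ retains its full left edge plus strictly positive pieces of its top and bottom edges (of total horizontal extent at least $\delta$, coming from the part of $s_i$ sticking out to the left of $s_{i+1}$), giving visible perimeter at least $2 + \delta' $ for some $\delta' > 0$ depending on $\delta$ and the vertical gaps. The top square $s_n$ has visible perimeter $4$, so the bottleneck is over $s_1, \dots, s_{n-1}$, and each of these has gap at least some uniform positive constant.

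The main obstacle I anticipate is handling squares that share nearly the same $y$-coordinate or nearly the same $x$-coordinate cleanly: if two consecutive squares $s_i, s_{i+1}$ have $y$-coordinates extremely close, then $s_{i+1}$ (displaced right by $\delta$) covers almost all of $s_i$ except a thin vertical strip of width $\delta$ on the left and the exposed top/bottom slivers of that strip — one must check the visible perimeter is still bounded below by $2$ plus a positive amount, which it is, roughly $2 + 2\delta$ minus lower-order terms, since the left edge ($1$) plus two slivers of length $\delta$ already exceed $2$ when combined with the small exposed piece of the right edge. A symmetric concern arises if the vertical gap is large, but that case is easier since then a tall exposed band on the side contributes generously. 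The cleanest route is probably to pick $\delta$ small enough (and note we are free to do so, as any positive $\delta \le (w-1)/(n-1)$ works) that we can give a single crude but positive lower bound on the gap uniformly over all $i$; since the statement only asks for \emph{some} positive gap, no optimization over $\delta$ is needed. Finally, I would remark that this same construction shows the supremum gap is positive for every instance, which is exactly the motivation stated before the lemma.
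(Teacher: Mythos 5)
Your construction is exactly the paper's: a staircase facing up and right with strictly increasing $x$-coordinates. However, your accounting of the visible perimeter contains a genuine error. You credit $s_i$ with its full left edge (length $1$) plus ``strictly positive pieces of its top and bottom edges\dots of total horizontal extent at least $\delta$'', and in the case you yourself flag as the main obstacle you claim that ``the left edge ($1$) plus two slivers of length $\delta$ already exceed $2$ when combined with the small exposed piece of the right edge.'' They do not: the exposed piece of the right edge has length $\min(1,\,y_{i+1}-y_i)$, so when both $\delta$ and the vertical offset are small the total $1+2\delta+\min(1,y_{i+1}-y_i)$ can be arbitrarily close to $1$ and need not reach $2$, let alone $2+\delta'$. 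As written, the step that is supposed to deliver a positive gap fails precisely in the hard case.

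The missing observation---the one the paper's one-line proof rests on---is that in a staircase facing up and right the entire \emph{bottom} edge of every square is visible, not merely a sliver of it: every square in front of $s_i$ is strictly higher, so its own bottom edge lies strictly above $y_i-\frac12$ and it cannot contain any point of $s_i$'s bottom edge. Together with the fully visible left edge this already accounts for visible perimeter $2$, and the exposed lower piece of the right edge (of positive length $\min(1,\,y_{i+1}-y_i)$, which the even higher squares $s_{i+2},\dots,s_n$ also cannot reach) pushes it strictly above $2$. With that correction your argument becomes the paper's proof verbatim, and no tuning of $\delta$ or case analysis on the vertical gaps is needed.
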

\begin{proof}
We build a staircase facing up and right, with no two squares sharing the $x$-coordinate, as in Figure~\ref{fig:staircase_genstaircase} (far left). Then the left and lower sides of each square are completely visible, as well as parts of its right and upper side. This yields a positive gap.
\end{proof}

\begin{definition}
A layout is \emph{reasonable} if it has positive gap, and it is $\varepsilon$-reasonable if it has gap larger than $\varepsilon>0$.
\end{definition}

\begin{figure}[htbp]
    \centering
\includegraphics[scale=1.4,page=1]{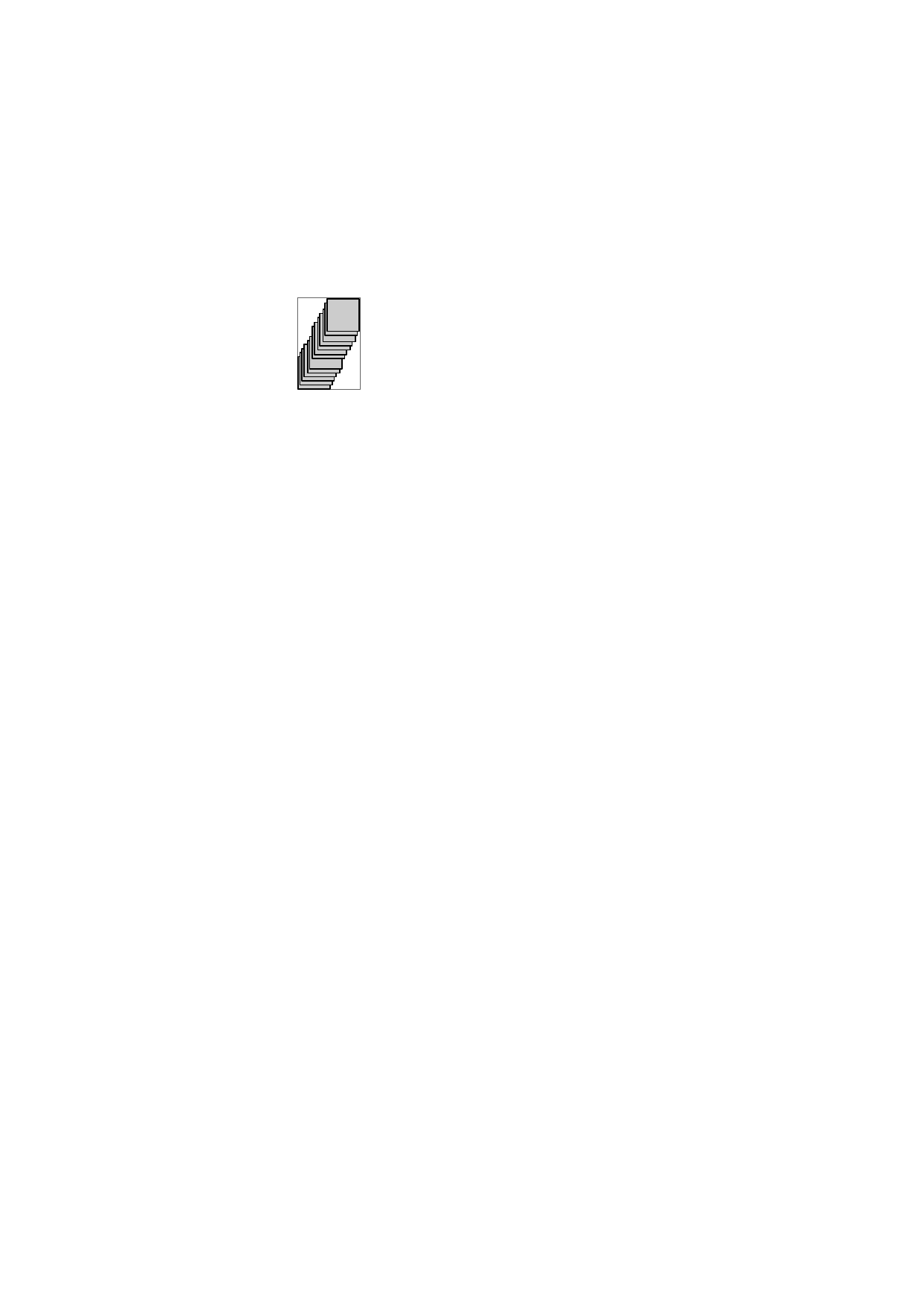}
\quad \quad \quad
        \includegraphics[scale=1.4,page=2]{staircase_genstaircase}
        \quad \quad \quad
        \includegraphics[scale=1.4,page=3]{staircase_genstaircase}
        \quad \quad \quad
        \includegraphics[scale=1.4,page=4]{staircase_genstaircase}
        \caption{A staircase (far left), and three generalized staircases.}
        \label{fig:staircase_genstaircase}
\end{figure}

\section{Squares stabbed by a point}
\label{sec:2x2case}

Throughout this section, we fix a strip width $w\leq 2$ and a strip height $h\leq 2$. In this case, all squares are stabbed by a single point.
Subsection~\ref{sec:reasonable} proves a (tight) upper bound on the gap of every layout, while Subsection~\ref{sec:staircase_is_opt} shows that a staircase layout of gap arbitrarily close to the supremum can efficiently be computed.

\subsection{Reasonable layouts}\label{sec:reasonable}
We start with a crucial structural result about reasonable layouts in the case $w,h\leq 2$.
\begin{proposition} \label{prop:bottom_must_be_leftright}
In a reasonable layout, the bottom square $s$ is not contained in the bounding box of the other squares.
\end{proposition}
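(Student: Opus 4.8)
The plan is to argue by contradiction: suppose the bottom square $s$ lies entirely inside the bounding box $B$ of the remaining squares, and show that $s$ then has non-positive gap, contradicting reasonableness. Since $s$ is the bottom square, every point of $\partial s$ that is covered by some other square is invisible; so the visible perimeter of $s$ is exactly the length of $\partial s$ not covered by $\bigcup_{t\neq s} t$. I want to show this is at most $2$. The key geometric fact I would use is that $w,h\le 2$, so $B$ has width and height at most $2$ as well, and all squares (being unit squares with centroids in $B$) pairwise overlap ``a lot'': in fact any two unit squares whose centroids are within horizontal distance $<1$ and vertical distance $<1$ share a positive-area rectangle. Because $h\le 2$ and the $y$-coordinates lie in $[\tfrac12,h-\tfrac12]$, all centroids have $y$-extent at most $h-1\le 1$, and similarly $x$-extent at most $w-1\le 1$; hence every pair of squares overlaps.

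The main step is a covering argument on each side of $s$. Consider the top side of $s$. If $s$ is not the topmost square, there is a square $s'$ above it; since the vertical gap between any two centroids is at most $h-1\le1$, $s'$ covers a horizontal band of $s$ that includes the top side wherever $s'$ overlaps $s$ horizontally. The only part of the top side of $s$ that can remain visible is where no higher square overlaps horizontally — but if $s$ is strictly inside $B$, then in particular its left side is not on the left wall of $B$ and its right side is not on the right wall, so there are squares whose $x$-extent reaches to the left of $s$ and to the right of $s$. I would make precise that the squares above $s$ (there is at least one, unless $s$ is topmost, which I handle separately) together cover the entire top side of $s$ except possibly a sub-segment, and symmetrically for the bottom side using squares below $s$; and the left and right sides of $s$ are each covered except for sub-segments bounded using the horizontal position of $s$ inside $B$. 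Summing the uncovered lengths on all four sides, the constraint that $s\subseteq B$ with $B$ of width and height $\le 2$ forces the total to be at most $2$. Concretely, I expect the bound to come out as: (uncovered top) + (uncovered bottom) $\le$ (something controlled by how far $s$ is from the left/right walls of $B$), and (uncovered left)+(uncovered right) $\le$ (something controlled by how far $s$ is from the top/bottom walls of $B$), with the two contributions trading off so their sum never exceeds $2$.

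The main obstacle, I expect, is handling the interaction between horizontal and vertical coverage cleanly — a point on the top side of $s$ could be left uncovered either because no higher square reaches it horizontally, or the bookkeeping of which square covers which sub-interval could get delicate when several squares are involved. I would streamline this by reducing to the \emph{extreme} neighbours: let $a$ be the highest square, $b$ the lowest, $\ell$ the leftmost, and $r$ the rightmost among the squares other than $s$ (these realize the four sides of $B$). Since all squares overlap $s$, it should suffice to track how $a,b,\ell,r$ alone cover $\partial s$: $a$ kills a portion of the top side, $b$ a portion of the bottom side, $\ell$ a portion of the left side, $r$ a portion of the right side, each portion of length equal to the overlap width/height, which is $1$ minus the corresponding centroid offset. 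Writing $s$'s centroid offsets from the four walls of $B$ and using that opposite offsets sum to (width of $B$)$-1\le1$ or (height of $B$)$-1\le 1$, the visible perimeter of $s$ is at most $4 - (\text{covered on top}) - (\text{covered on bottom}) - (\text{covered on left}) - (\text{covered on right}) \le 2$. I would also need to separately dispose of the degenerate cases where $s$ is the topmost or bottommost square among all $n$ — but if $s\subseteq B$ and $s$ is, say, topmost, then $s$'s top side is still interior to $B$, and I can instead use $\ell$ and $r$ (or the square realizing the opposite $y$-wall) to reach the same conclusion, because $s$ being strictly inside $B$ in the $y$-direction already requires some square above it. A final check: the strict-containment hypothesis ``not contained in the bounding box of the others'' means $s\not\subseteq B$, so the contrapositive gives exactly that a reasonable layout forces $s$ to stick out of $B$, i.e., to realize one of the four walls; this is the statement, and the above covering bound is what rules out the alternative.
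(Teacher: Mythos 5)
Your high-level plan (argue by contradiction, show that containment in the bounding box $B$ forces visible perimeter at most $2$, and reduce to the extreme squares) is the same as the paper's, but the accounting in your streamlined version has a genuine gap: the assignment ``highest square covers the top side, lowest covers the bottom, leftmost covers the left, rightmost covers the right'' does not yield the bound you claim. With that assignment, the uncovered lengths are $|x_a-x_s|$, $|x_b-x_s|$, $|y_\ell-y_s|$, $|y_r-y_s|$, and these are \emph{not} the offsets of $s$ from the four walls of $B$, so they do not telescope to $(w-1)+(h-1)\le 2$. The highest and lowest squares can both lie on the \emph{same} horizontal side of $s$, each at horizontal distance up to $w-1$. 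Concretely, take $w=h=2$, $s$ with centroid $(1.5,1)$, and the other squares at $(0.5,1.5)$, $(0.5,0.5)$, $(1.5,1.4)$. Then $s\subseteq B=[0,2]^2$, but your four quantities sum to $1+1+0.5+0.4=2.9$, so your chain only gives visible perimeter $\le 2.9$. (The true value here is about $1.6$: the top side of $s$ is covered entirely by the \emph{rightmost} square, not by the highest one, which only grazes $s$ along the line $x=1$.) The underlying issue is that each extreme square is displaced from $s$ in a known direction along one axis but in an unknown direction along the other, so you cannot decide in advance which side of $s$ it covers.

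The paper's proof fixes exactly this by making each chosen square cover a \emph{corner} of $s$ (hence portions of two sides) and by pairing two squares that cover \emph{opposite} corners: then the two uncovered horizontal pieces together have length at most the horizontal distance between the two squares, which is at most $w-1\le 1$, and symmetrically at most $h-1\le 1$ for the vertical pieces. Achieving an opposite-corner pair requires a case analysis you do not perform: take $s_\ell$ and $s_r$ realizing the left and right walls of $B$; if one is above $s$ and the other below, they already cover opposite corners; if both are (say) above $s$, one must bring in the square $s_d$ realizing the bottom wall of $B$ and pair it with whichever of $s_\ell,s_r$ lies on the opposite horizontal side. Your first, non-streamlined paragraph gestures at the right ingredients (squares reaching left and right of $s$, trading off horizontal against vertical coverage) but never isolates this opposite-corner pairing or the case distinction, which is the crux of the argument; as written, the proof does not go through.
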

\begin{proof}
  Assume otherwise and consider some two squares $s_{\ell}$ and $s_r$ defining, respectively, the left and right sides of the bounding box. If one of them is above $s$ and the other one below, the situation is as in Figure~\ref{fig:bottom_mid_square} (left). Since $s_{\ell}$ and $s_r$ overlap in both $x$- and $y$-coordinate, $s$ has horizontal and vertical visible edges of total length at most $1$ each. Thus, the visible perimeter of $s$ is at most 2. In the other case, $s_{\ell}$ and $s_r$ are w.l.o.g.\ both above $s$ as in  Figure~\ref{fig:bottom_mid_square} (right). Then we consider the square $s_d$ defining the bottom side of the bounding box; w.l.o.g. $s_d$ is left of $s$. In this case, $s_r$ and $s_d$ prove that $s$ has no gap.
\end{proof}

\begin{figure}[htbp]
  \centering
  \includegraphics[scale=1.1]{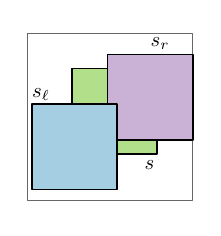}\quad\quad
  \includegraphics[scale=1.1]{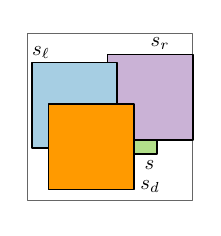}
        \caption{The visible perimeter of the bottom square $s$ is at most two if it is contained in the bounding box of the other squares.}
        \label{fig:bottom_mid_square}
  \end{figure}

\begin{lemma} \label{lem:staircase_value}
Every layout $\A$ of $n$ squares has gap at most $\frac{w+h-2}{n-1}$, for $n\geq 2$.
\end{lemma}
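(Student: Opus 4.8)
The plan is to find, in an arbitrary layout $\A$ of $n$ squares, a single square whose visible perimeter is already at most $2+\frac{w+h-2}{n-1}$. We may assume $\A$ is reasonable, since otherwise its gap is non-positive and the bound is trivial (note $w+h-2>0$). Write the stacking order as $s_{\sigma(1)}\prec s_{\sigma(2)}\prec\cdots\prec s_{\sigma(n)}$ and, for $k=1,\dots,n$, let $B_k$ be the bounding box of the ``top part'' $\{s_{\sigma(k)},\dots,s_{\sigma(n)}\}$, with width $W_k$ and height $H_k$. Since deleting squares never decreases any square's visible perimeter, each top part is again a reasonable layout. The boxes are nested, $B_1\supseteq B_2\supseteq\cdots\supseteq B_n=s_{\sigma(n)}$, so $w\ge W_1\ge\cdots\ge W_n=1$ and $h\ge H_1\ge\cdots\ge H_n=1$, and hence by telescoping
\[
\sum_{k=1}^{n-1}\bigl[(W_k-W_{k+1})+(H_k-H_{k+1})\bigr]=(W_1-1)+(H_1-1)\le w+h-2 .
\]
The smallest of these $n-1$ nonnegative summands is therefore at most $\frac{w+h-2}{n-1}$; let $k^\ast$ realize it and set $\alpha:=W_{k^\ast}-W_{k^\ast+1}$, $\beta:=H_{k^\ast}-H_{k^\ast+1}$. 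It suffices to show that $s:=s_{\sigma(k^\ast)}$ has gap at most $\alpha+\beta$.

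To prove this, set $B:=B_{k^\ast+1}$. Because $B_{k^\ast}$ is the bounding box of $B$ together with $s$, the numbers $\alpha,\beta$ are exactly the horizontal and vertical amounts by which $s$ protrudes from $B$. Proposition~\ref{prop:bottom_must_be_leftright}, applied to the reasonable layout $\{s_{\sigma(k^\ast)},\dots,s_{\sigma(n)}\}$ whose bottom square is $s$, tells us $s$ is not contained in $B$, so $\alpha+\beta>0$; and since $s$ is a unit square while $B$ has width and height at least $1$, $s$ can protrude from $B$ on at most one horizontal and at most one vertical side, so after a reflection I may assume these are the left side (by $\alpha$) and the bottom side (by $\beta$). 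Normalize so that $B=[0,W']\times[0,H']$ with $W'=W_{k^\ast+1}\ge1$, $H'=H_{k^\ast+1}\ge1$; the squares in front of $s$ are then unit squares $t_1,\dots,t_m\subseteq B$ whose bounding box equals $B$. Two consequences of $w\le2$, $h\le2$ carry the argument: (i) any two squares have centroids within $w-1\le1$ horizontally and $h-1\le1$ vertically, so consecutive $t_j$ in either coordinate order have overlapping projections and hence $\bigcup_j(x\text{-range of }t_j)=[0,W']$ and $\bigcup_j(y\text{-range of }t_j)=[0,H']$; (ii) $W'+\alpha=W_{k^\ast}\le w\le2$ and $H'+\beta=H_{k^\ast}\le h\le2$, which (together with $W'\le2$, $H'\le2$) forces every $t_j$ to be long enough to reach whatever edge of $s$ it is called upon to cover — e.g. $H'+\beta\le2$ gives each centroid $y$-coordinate $d_j\le H'-\tfrac12\le\tfrac32-\beta$, so $[d_j-\tfrac12,d_j+\tfrac12]\ni 1-\beta$.

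I would then split into two regimes. If $\alpha>0$ and $\beta>0$, the set $\partial s\cap B$ consists only of the portion of the top side of $s$ with $x\in[0,1-\alpha]$ and the portion of the right side with $y\in[0,1-\beta]$, of total length $(1-\alpha)+(1-\beta)$. By (ii) every $t_j$ reaches the height $1-\beta$, so by (i) their $x$-ranges cover the first segment; symmetrically the second segment is covered; hence the covered part of $\partial s$ has length $\ge2-\alpha-\beta$ and the visible perimeter of $s$ is $\le 2+\alpha+\beta$. If exactly one of $\alpha,\beta$ vanishes, say $\alpha=0$, then $\partial s\cap B$ additionally contains the two vertical sides of $s$ truncated to $y\in[0,1-\beta]$, and the previous argument only covers the full top side (length $1$). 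Here I would further use the bottommost square $t^{b}$ of $B$: its $y$-range is $[0,1]\supseteq[0,1-\beta]$, and its $x$-range — a unit interval inside $[0,W']$ with $W'\le2$ — must contain the left edge or the right edge of $s$, so $t^b$ covers one of the two truncated vertical sides entirely, adding another $1-\beta$ and again giving visible perimeter $\le 2+\beta$. In both regimes $\mathrm{gap}(\A)\le\mathrm{gap}(s)\le\alpha+\beta\le\frac{w+h-2}{n-1}$, as desired.

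The telescoping/averaging step, the reduction via Proposition~\ref{prop:bottom_must_be_leftright}, and the fact that top parts of the stacking order stay reasonable are routine. The main obstacle is the second regime: when $s$ protrudes from $B$ in only one direction, $\partial s\cap B$ is strictly longer than $2-\alpha-\beta$, so there is no slack, and one must argue that a single extreme square of $B$ covers an entire side of $s$ — which is exactly the place where the hypotheses $w\le2$ and $h\le2$ are squeezed hardest.
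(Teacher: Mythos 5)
Your proof is correct and follows essentially the same route as the paper: both decompose along the stacking order, measure how much each square protrudes from the bounding box of the squares in front of it, bound the total protrusion by $w+h-2$ (your telescoping of nested box dimensions is equivalent to the paper's disjoint-interval argument), and conclude via averaging. The only difference is that you spell out rigorously the covering argument (that the part of $\partial s$ inside $B$ is hidden, including the delicate one-sided ``side square'' case via the bottommost square of $B$), which the paper handles by case analysis with figures.
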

\begin{proof}
  If $\A$ is unreasonable, there is nothing to prove. Otherwise let $t_1,t_2,\ldots,t_n$ be the sequence of squares in stacking order, i.e.~$t_1\prec t_2\prec \cdots\prec t_n$, and let the bounding box of $t_i,t_{i+1},\ldots,t_n$ be denoted by $\tau_i$ for $1 \leq i \leq n$.
  Since $\A$ is reasonable, $t_i$ ``sticks out'' of $\tau_{i+1}$ (i.e.~it is not contained in it)
  by  Proposition~\ref{prop:bottom_must_be_leftright}, which can be applied as a sublayout of a reasonable layout is clearly reasonable as well.

  There are two cases: $t_i$ is a  ``corner square'' (Figure~\ref{fig:gen_stair_bound} left), or a ``side square'' (Figure~\ref{fig:gen_stair_bound} middle and right). Let $\Delta X_i$ and $\Delta Y_i$ quantify by how much $t_i$ sticks out, horizontally and vertically. For a side square, one of those numbers is $0$.

For a corner square, the two sides of $t_i$ incident to the corner contribute visible perimeter $2$, meaning that the gap of the square is $\Delta X_i+\Delta Y_i$. For a horizontal side square as in Figure~\ref{fig:gen_stair_bound} (middle), the visible perimeter is at most $2+\Delta Y_i$, and for a vertical side square (right), it is at most $2+\Delta X_i$. In both cases, $\Delta X_i+\Delta Y_i$ is also an upper bound for the gap of the square.

  \begin{figure}[htbp]
  \centering
      \includegraphics[scale=0.90]{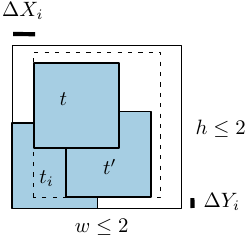}\quad
      \includegraphics[scale=0.90]{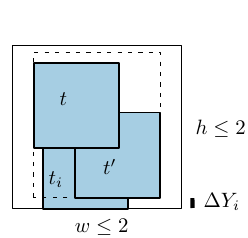}\quad
      \includegraphics[scale=0.90]{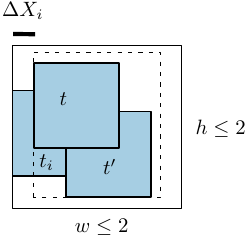}
    \caption{Proof of Lemma~\ref{lem:staircase_value}\label{fig:gen_stair_bound}.}
  \end{figure}

  This means that, for $1\leq i<n$, the (open) intervals corresponding to $\Delta X_i$ (as well as those corresponding to $\Delta Y_i$) span disjoint $x$-intervals (or $y$-intervals) that are also disjoint from the two intervals (one in each coordinate) of length one that is spanned by the top square. Hence,
  \[
    \sum_{i=1}^{n-1} (\Delta X_i+\Delta Y_i) \leq (w-1)+(h-1) = w+h-2.
  \]
It follows that there is some $t_i$ with gap at most $\Delta X_i+\Delta Y_i\leq \frac{w+h-2}{n-1}$.
\end{proof}
This upper bound on the gap is easily seen to be tight.
\begin{lemma}\label{obs:staircase_value}
There are instances of $n$ squares for which a staircase has gap $\frac{w+h-2}{n-1}$.
\end{lemma}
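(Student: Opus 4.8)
The plan is to exhibit, for every admissible pair $(w,h)$, a concrete instance together with a concrete staircase for it whose gap equals $\frac{w+h-2}{n-1}$. Since Lemma~\ref{lem:staircase_value} already shows that no layout can have a larger gap, it suffices to construct a single staircase in which \emph{every} square has gap at least $\frac{w+h-2}{n-1}$.

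I would take the ``uniform'' instance $y_i := \tfrac12 + (i-1)\tfrac{h-1}{n-1}$ for $i = 1,\dots,n$, so that $\tfrac12 = y_1 < y_2 < \dots < y_n = h - \tfrac12$; since $h > 1$ these coordinates are distinct and lie in $[\tfrac12, h-\tfrac12]$, so the instance is admissible. As layout I would take the staircase facing up and to the right given by $x_i := \tfrac12 + (i-1)\tfrac{w-1}{n-1}$ together with the stacking order $s_1 \prec s_2 \prec \dots \prec s_n$. With these choices consecutive squares are displaced by the constant amounts $\Delta X := x_{i+1}-x_i = \tfrac{w-1}{n-1}$ and $\Delta Y := y_{i+1}-y_i = \tfrac{h-1}{n-1}$, and each non-top square $s_i$ occupies the lower-left corner of the bounding box of the squares in front of it.

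The main step is to compute the visible perimeter of a non-top square $s_i$ ($i < n$). Every square in front of $s_i$ has left edge at $x_{i+1}-\tfrac12 > x_i - \tfrac12$ and bottom edge at $y_{i+1}-\tfrac12 > y_i - \tfrac12$, so the whole left side and the whole bottom side of $s_i$ are visible, contributing $2$ to the visible perimeter. On the top side of $s_i$ the part with $x < x_{i+1}-\tfrac12$, of length $\Delta X$, is visible for the same reason, while the remaining part is covered by the square $s_{i+1}$, which lies in front of $s_i$ and overlaps it because $\Delta X \le w-1 \le 1$ and $\Delta Y \le h-1 \le 1$. Symmetrically, precisely the part of the right side of $s_i$ with $y < y_{i+1}-\tfrac12$, of length $\Delta Y$, is visible. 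Hence $s_i$ has visible perimeter $2 + \Delta X + \Delta Y$ and gap $\Delta X + \Delta Y = \tfrac{w+h-2}{n-1}$. The top square $s_n$ has gap $2 \ge \tfrac{w+h-2}{n-1}$, since $w + h - 2 \le 2 \le 2(n-1)$. Therefore the gap of this layout is exactly $\tfrac{w+h-2}{n-1}$, which together with Lemma~\ref{lem:staircase_value} yields the claim.

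I do not expect a genuine obstacle here; the only point that needs a little care is the boundary case $w = 2$ or $h = 2$, where consecutive staircase squares merely touch ($\Delta X = 1$ or $\Delta Y = 1$) rather than overlapping in their interiors. One checks that even then, with closed squares, the covering of the top side (resp.\ the right side) of $s_i$ by $s_{i+1}$ still begins precisely at $x = x_{i+1}-\tfrac12$ (resp.\ $y = y_{i+1}-\tfrac12$), so the computed visible lengths, and hence the gap, are unaffected. The rest is routine bookkeeping.
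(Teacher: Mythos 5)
Your proposal is correct and follows essentially the same route as the paper: the uniformly spaced instance with $y_i=\frac12+(h-1)\frac{i-1}{n-1}$, $x_i=\frac12+(w-1)\frac{i-1}{n-1}$, and the monotone stacking order, giving every non-top square gap $\Delta X+\Delta Y=\frac{w+h-2}{n-1}$. You merely spell out the visibility computation and the degenerate touching case in more detail than the paper does.
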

\begin{proof}
We consider the uniformly spaced instance ($y_i=\frac12+(h-1)\frac{i-1}{n-1}$). Choosing $x_i=\frac12+(w-1)\frac{i-1}{n-1}$ and $s_1\prec s_2\cdots\prec s_n$ leads to a staircase with $\Delta X_i=\frac{w-1}{n-1}$ and $\Delta Y_i=\frac{h-1}{n-1}$ for $1\leq i<n$, and hence the gap is  $\frac{w+h-2}{n-1}$.
\end{proof}

\subsection{Computing staircases with gap arbitrarily close to the supremum}\label{sec:staircase_is_opt}

We next prove that for every reasonable layout with gap $\gamma$, there is a staircase with a gap at least $\gamma-\delta$, for any $\delta>0$. Moreover,  with $\gamma^\star$ being the supremum gap over all layouts, a staircase of gap $\gamma^\star-\delta$ can be efficiently computed.

For this, we first look at staircases in more detail. Consider a staircase of $n$ squares, facing up and right, with centroids $(x_i, y_i)$, $1 \leq i \leq n$. We define $\Delta y_i = y_{i+1} - y_{i}>0$ and $\Delta x_i = x_{i+1} - x_{i}\geq 0$, for $1 \leq i \leq n-1$. If $\Delta x_i>0$, the left and lower sides of $s_i$ are fully visible, and the gap of $s_i$ is $\Delta y_i + \Delta x_i$; the top square $s_n$ has gap $2$. If all $\Delta x_i$ are positive, the staircase is called \emph{proper}.

Now consider the problem of finding such a proper staircase of large gap. For this, the $\Delta y_i$ are fixed, but $x_1<x_2<\cdots < x_n$ can be chosen freely, meaning that the values $\Delta x_i$ can be any positive numbers satisfying $\sum_{i=1}^{n-1} \Delta x_i \leq w-1$.
We want to maximize the gap $\min_{i=1}^{n-1}(\Delta y_i + \Delta x_i)$ subject to the previously mentioned constraints. Due to the strict inequalities on the $\Delta x_i$'s, the maximum may not exist (as pointed out in Section~\ref{sec:prelims}). But allowing $\Delta x_i\geq 0$, the maximum is attained by the solution of a linear program in the variables $\Delta x_i$ and $g$ (the $\Delta y_i=y_{i+1}-y_i$ are constants):
\begin{equation}\label{eq:lp}
  \begin{array}{lrcl}
    \text{maximize~} g \\
    \text{subject to} &  \Delta x_i + \Delta y_i &\geq& g, \quad i=1,\ldots,n-1\\
                    & \sum_{i=1}^{n-1} \Delta x_i&\leq&  w-1\\
    & \Delta x_i&\geq & 0, \quad i=1,\ldots,n-1.
  \end{array}
\end{equation}

We formalize this in the main result of this section which also implies that the optimal solution $g^\star$ of this linear program equals $\gamma^{\star}$, the supremum gap over all layouts.

\begin{lemma} \label{lem:free_y_opt_value}
  Let $\A$ be a reasonable layout with gap $\gamma$. There is a
  feasible solution of the linear program (\ref{eq:lp}) with value $g\geq\gamma$. Vice versa, for every feasible solution of (\ref{eq:lp}) with value $g$, and for every $\delta>0$, there exists a proper staircase $\A'$ with gap $g-\delta$.
\end{lemma}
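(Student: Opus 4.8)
The plan is to prove the two directions separately. For the first direction, suppose $\A$ is a reasonable layout with gap $\gamma$. I want to extract from it a feasible point of the linear program~(\ref{eq:lp}) with value at least $\gamma$. The natural idea is to reuse the quantities $\Delta X_i$ and $\Delta Y_i$ from the proof of Lemma~\ref{lem:staircase_value}: there we showed that the gap of the $i$-th square in stacking order is at most $\Delta X_i + \Delta Y_i$, that these intervals are pairwise disjoint (and disjoint from the unit intervals of the top square), and hence $\sum_{i=1}^{n-1}(\Delta X_i + \Delta Y_i) \le w+h-2$. However, the LP only has a budget of $w-1$ on $\sum \Delta x_i$ and has the constants $\Delta y_i = y_{i+1}-y_i$ baked in. So I cannot directly set $\Delta x_i := \Delta X_i$; instead I should think of the quantity $\Delta X_i + \Delta Y_i$ as what must be ``covered'', and split it into a part covered for free by $\Delta y_i$ and a part that must be paid for out of the horizontal budget. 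Concretely: since the layout has gap $\ge \gamma$, for each $i<n$ we have $\Delta X_i + \Delta Y_i \ge \gamma$ (using $w,h\le 2$, which guarantees the single-point-stabbing structure that makes the Lemma~\ref{lem:staircase_value} argument go through). Now reindex so that the squares are sorted by $y$-coordinate rather than by stacking order — here one must be a little careful, since the bound from Lemma~\ref{lem:staircase_value} is stated in stacking order. The cleaner route is: set $g := \gamma$, and for each consecutive pair $(i,i+1)$ in $y$-order define $\Delta x_i := \max(0,\,\gamma - \Delta y_i)$, i.e. pay for exactly the shortfall of the vertical step. Then the first constraint $\Delta x_i + \Delta y_i \ge g$ holds by construction, and the nonnegativity is immediate. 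The only thing to check is the budget constraint $\sum_i \Delta x_i \le w-1$; this is where the disjointness argument of Lemma~\ref{lem:staircase_value} is invoked — the total amount by which a reasonable layout can ``stick out'' horizontally beyond the top square is at most $w-1$, and each positive $\Delta x_i$ corresponds to a genuine horizontal stick-out of size $\Delta X_i \ge \gamma - \Delta y_i = \Delta x_i$ in the original layout (this last inequality needs $\Delta Y_i \le \Delta y_i$, which should follow from the squares being unit squares stabbed by a point and the definition of $\Delta Y_i$ as the vertical stick-out). Wrapping these inequalities up gives $\sum \Delta x_i \le \sum \Delta X_i \le w-1$, so the point is feasible with value $\gamma$.

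For the converse direction, I start from a feasible solution $(\Delta x_1,\ldots,\Delta x_{n-1}, g)$ of~(\ref{eq:lp}) and a target slack $\delta>0$, and I build an explicit proper staircase facing up and right. The construction is to perturb: replace each $\Delta x_i$ by $\Delta x_i' := \Delta x_i + \delta/(n-1)$ if this keeps things inside the strip, or more safely by $\Delta x_i' := \max(\Delta x_i, \delta')$ for a sufficiently small $\delta'>0$ chosen so that all $\Delta x_i'$ are strictly positive, $\sum \Delta x_i' \le w-1$, and $\Delta x_i' \ge \Delta x_i - \delta$ for all $i$. Concretely one can take $\Delta x_i' = (1-\eta)\Delta x_i + \eta c$ for a small $\eta>0$ and $c>0$ with $c(n-1)\le w-1$; for $\eta$ small enough each $\Delta x_i'$ is within $\delta$ of $\Delta x_i$ and the sum stays within budget. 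Set $x_1 := \tfrac12$ and $x_{i+1} := x_i + \Delta x_i'$; since $\sum \Delta x_i' \le w-1$ we get $x_n \le w - \tfrac12$, so all centroids lie in $[\tfrac12, w-\tfrac12]$. Take the stacking order $s_1 \prec s_2 \prec \cdots \prec s_n$. Because all $\Delta x_i' > 0$ and all $\Delta y_i > 0$, the left and bottom sides of each $s_i$ ($i<n$) are fully visible and its visible perimeter is exactly $2 + \Delta x_i' + \Delta y_i \ge 2 + (\Delta x_i - \delta) + \Delta y_i \ge 2 + g - \delta$; the top square has visible perimeter $4 \ge 2 + g$ since $g \le w+h-2 \le 2$. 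Hence the gap of this proper staircase is at least $g - \delta$, as required.

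The main obstacle I expect is the bookkeeping in the first direction that connects the stacking-order quantities $\Delta X_i, \Delta Y_i$ from Lemma~\ref{lem:staircase_value} with the fixed vertical steps $\Delta y_i = y_{i+1} - y_i$ that the LP uses. In a general reasonable layout the $i$-th square in stacking order need not be adjacent in $y$-order to the $(i+1)$-st, so the inequality $\Delta Y_i \le \Delta y_{\sigma(i)}$ for the appropriate index $\sigma(i)$ requires an argument — essentially that the vertical stick-out of a square beyond the bounding box of the squares in front of it is bounded by the gap to its nearest neighbour in that direction, which in turn is at most the gap to the globally adjacent $y$-coordinate. Handling this cleanly (perhaps by summing the first LP constraint over a cleverly chosen matching between stacking-order steps and $y$-order steps, or by observing that it suffices to bound $\sum \max(0, g - \Delta y_i)$ rather than tracking individual correspondences) is the delicate part; everything else is the kind of interval-disjointness and perturbation argument already rehearsed in Lemmas~\ref{lem:staircase_value} and~\ref{obs:positive_gap}.
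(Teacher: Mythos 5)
Your ``vice versa'' direction is correct and matches the paper's: perturb the $\Delta x_i$ to make them all strictly positive while losing at most $\delta$ in each constraint and staying within the budget $w-1$, then assemble the proper staircase; nothing more is needed there.

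The first direction, however, has a genuine gap, and it sits exactly where you flag ``the delicate part.'' Your budget argument needs, for each $y$-order index $i$ with $\gamma-\Delta y_i>0$, a \emph{distinct} stacking-order index $\sigma(i)$ with $\Delta Y_{\sigma(i)}\leq \Delta y_i$, so that $\Delta X_{\sigma(i)}\geq\gamma-\Delta Y_{\sigma(i)}\geq\gamma-\Delta y_i=\Delta x_i$ and the disjointness of the $\Delta X$-intervals yields $\sum_i\Delta x_i\leq w-1$. But the inequality you hope for goes the wrong way: the vertical stick-out $\Delta Y_j$ of the $j$-th square in stacking order is the distance to the nearest square \emph{in front of it} in that vertical direction, which is \emph{at least} (not at most) the distance to its globally adjacent neighbour in $y$-order --- that adjacent neighbour may well be behind it. Concretely, $\Delta Y_j$ decomposes as a sum of several consecutive $\Delta y$'s (in Figure~\ref{fig:free_y_opt_value}, $\Delta Y_4=y_4-y_2=\Delta y_2+\Delta y_3$), while other squares have $\Delta Y_j=0$. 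If two blocks $\Delta y_{i}$ and $\Delta y_{i'}$ both sit inside the same $\Delta Y_j$, you cannot charge both of $\gamma-\Delta y_i$ and $\gamma-\Delta y_{i'}$ to the single quantity $\Delta X_j\geq\gamma-\Delta y_i-\Delta y_{i'}$; the deficit must be paid for by the squares with $\Delta Y=0$, which have $\Delta X\geq\gamma$ and hence a surplus. The paper's proof supplies exactly this redistribution mechanism: it decomposes each $\Delta Y_j$ into blocks $\Delta y_i$ (each of the $n-1$ blocks occurring exactly once overall) and runs a water-filling argument --- pour the total horizontal budget $\sum_j\Delta X_j$ over the slots, observe that the water settles at level at least $\gamma$, and note that moving a block from an over-full slot to an empty slot only submerges it further and thus never lowers the level. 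Without this (or an equivalent global accounting over the block incidence structure), the feasibility of your proposed solution $\Delta x_i=\max(0,\gamma-\Delta y_i)$ is unproven, even though it is in fact the pointwise-minimal candidate.
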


\begin{proof}
  As in the proof of Lemma~\ref{lem:staircase_value}, we consider the $\Delta X_i$ and $\Delta Y_i$, $i=1,\ldots,n-1$, quantifying by how much the $i$-th square in the stacking order sticks out of the bounding box of the squares in front of it, horizontally and vertically. See Figure~\ref{fig:free_y_opt_value} (top left part) for an illustration, where the $\Delta X_i$ and $\Delta Y_i$ values are visualized as rectangle areas. Each rectangle is spanned by a unit side and an \emph{interval} corresponding to the value. For example, $\Delta X_4$, the rectangle with the orange boundary, has the interval between the left sides of $t_4$ and $t_5$.

  \begin{figure}[b]
    \centering
    \includegraphics[width=\textwidth,page=3]{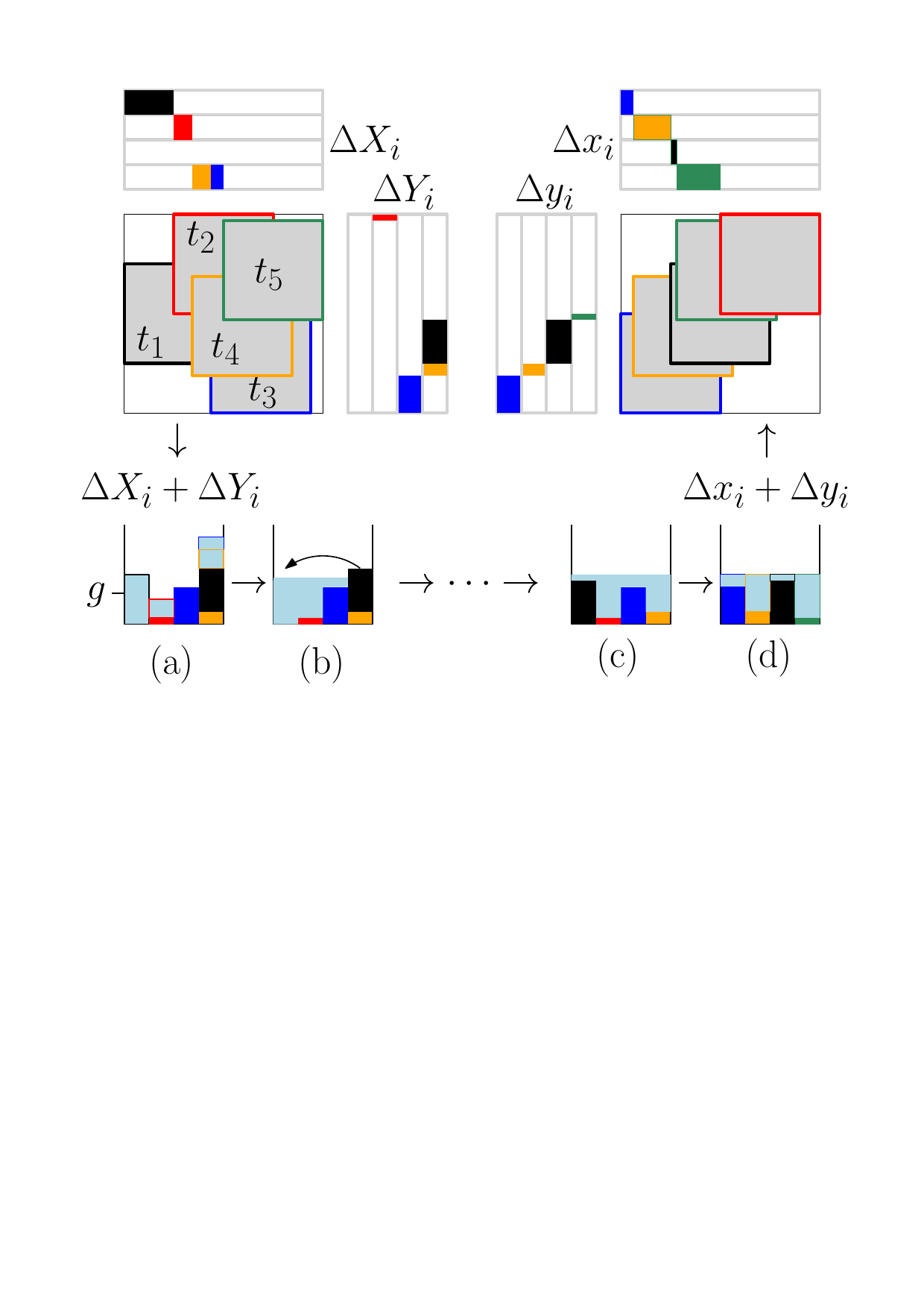}
    \caption{Proof of Lemma~\ref{lem:free_y_opt_value}.}
    \label{fig:free_y_opt_value}
  \end{figure}

  The $\Delta Y_i$ rectangles are further subdivided into \emph{blocks} whose intervals are gaps between vertically adjacent squares. For example, $\Delta Y_4$ is made of two blocks. The interval of the orange-black block is the vertical gap between the lower sides of $t_4$ (orange) at height $y_2-1/2$ and $t_1$ (black) at height $y_3-1/2$; the interval of the black-green block is the vertical gap between the lower sides of $t_1$ (black) at height $y_3-1/2$ and $t_5$ (green) at height $y_4-1/2$. For the green-red block, the interval corresponds to the gap between the  \emph{upper} sides of $t_5$ (green) at height $y_4+1/2$ and $t_2$ (red) at height $y_5+1/2$. In general, each block interval is therefore of the form $\Delta y_j= y_{j+1}-y_j$.

  As argued in the proof of Lemma~\ref{lem:staircase_value}, the
$\Delta X_i$ and $\Delta Y_i$ satisfy the inequalities
\begin{equation}
\label{eq:XYdelta}
    \Delta X_i+\Delta Y_i\geq \gamma, \quad i=1,\ldots,n-1,
\end{equation}
  with $\gamma$ being the gap of $\A$.

  We will perform discrete steps that gradually turn the $\Delta Y_i$ into the prescribed $\Delta y_i=y_{i+1}-y_i$, while changing the $\Delta X_i$ into suitable nonnegative $\Delta x_i$, maintaining their sum. If we can also maintain the constraints (\ref{eq:XYdelta}) throughout, we will arrive at a feasible solution of (\ref{eq:lp}) with value $g\geq \gamma$.

  Initially, the $\Delta X_i,\Delta Y_i$ are sorted by stacking order; our step-wise process will therefore first result in values $\Delta X'_i,\Delta Y'_i$ such that the $\Delta Y'_i$ are a permutation of the $\Delta y_i$. Without affecting the constraints (\ref{eq:XYdelta}), we finally
  reshuffle the pairs $(\Delta X'_i, \Delta Y_i)$ to get a solution of (\ref{eq:lp}).

Here is how the step-wise process works.
If the $\Delta Y_i$ are already a permutation of the $\Delta y_i$, we are done after reshuffling. This is the case if and only if each  $\Delta Y_i$ consists of exactly one block.

But in general, some $\Delta Y_i$ may have more than one block, or no block at all. In the example in Figure~\ref{fig:free_y_opt_value}, we have $\Delta Y_4=y_4-y_2$ consisting of two blocks with intervals between $t_4$ (orange) and $t_1$ (black), and between $t_1$ (black) and $t_5$ (green). $\Delta Y_1$ in turn has no blocks, as $t_1$ does not stick out vertically.

  All the $\Delta Y_i$ together use all the $n-1$ blocks. Indeed, the bounding box of the squares in front of $t_i$ is disjoint from whatever sticks out of it, and the last bounding box only contains the top square.

Now we repeatedly move blocks from rectangles with at least two blocks to rectangles with no block. We can visually analyze this as follows: think of a basin that initially holds the $\Delta Y_i$ rectangles, $i=1,\ldots,n-1$, as in part (a) of Figure~\ref{fig:free_y_opt_value}.
For each $i$, we pour $\Delta X_i$ units of water into the basin, which corresponds to the area of the rectangles with colored boundary in part  (a) of Figure~\ref{fig:free_y_opt_value}.
As $\bar{\gamma}:=\min_i (\Delta X_i+\Delta Y_i)\geq \gamma$, the water will settle at level at least $\bar{\gamma}\geq \gamma$; see part (b) of the figure.
Moving a block to a ``free slot'' will submerge it further, and this can only increase the water level; see step (b)-(c).

In the end, we have one block $\Delta y_j=y_{j+1}-y_j$ per slot, and sorting the slots by index $j$, as in part (d), yields rectangles $\Delta y_1,\ldots,\Delta y_{n-1}$, with columns $\Delta x_1,\ldots \Delta x_{n-1}$  of water above them, such that $\min (\Delta x_i+\Delta y_i)\geq \bar{\gamma}$. In general, some $\Delta y_i$'s can still be above the water level in which case the corresponding $\Delta x_i$ is $0$. This is our desired solution of (\ref{eq:lp}).

For the ``vice versa'' statement of the lemma, consider any feasible solution of (\ref{eq:lp}) with value $g$. From this, we can construct a proper staircase with gap at least $g-\delta$, by slightly redistributing the $\Delta x_i$ to make all of them positive. Then, we can in turn build a staircase with the prescribed $x$- and $y$-gaps as in the upper right part of the figure.
\end{proof}

We remark that the linear program~(\ref{eq:lp}) can be efficiently solved in $O(n\log n)$ time, employing the water analogy. After sorting the $\Delta y_i$ rectangles, and assuming that the water currently rises to the top of one of them, it is easy to compute in $O(1)$ time the amount of additional water required to reach the top of the next higher rectangle. Indeed, in this range, the water level is a linear function of the amount of additional water. If reaching the top of the next higher rectangle would need more water than our total budget of $w-1$ allows, we arrive at the optimal level $g^\star$ before.

\section{Squares stabbed by a vertical line}
\label{sec:slabcase}

Throughout this section, we consider a strip of width $w\leq 2$ and arbitrary height $h>1$, with $n$ squares of fixed $y$-coordinates $\frac12\leq y_1< y_2<\cdots< y_n\leq h-\frac12$.
Let $1/k = (h-1)/(n-1)$ be the (maximal) average $y$-distance between adjacent centroids in the $y$-order. We first show that we can asymptotically approximate the supremum gap up to a factor of $2$. More precisely, as the strip remains fixed and $n\rightarrow\infty$, we have $1/k\rightarrow 0$ and thus approach a factor of $2$ using Theorem~\ref{thm:main_approx} below. We still present our results in terms of $k$ to make it clear what happens if the strip height $h$ grows with $n$.

\begin{theorem}\label{thm:main_approx}
  Let $\gamma^\star$ be the supremum gap over all layouts. In time $O(n\log n)$, we can construct a layout with gap at least $\gamma^\star(\frac12-O(\frac1k))$. We refer to this procedure as the \emph{squeezing algorithm}.
\end{theorem}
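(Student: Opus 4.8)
The plan is to reduce the line-stabbed case to the point-stabbed case of Section~\ref{sec:2x2case} by cutting the strip horizontally into $O(k)$ blocks, each of height at most $2$, applying the staircase machinery of Lemma~\ref{lem:free_y_opt_value} within each block, and then stacking the resulting per-block staircases on top of one another. Concretely, I would first fix an optimal (or near-optimal) layout $\A$ with gap close to $\gamma^\star$ and bound $\gamma^\star$ from above in terms of the global ``sticking-out'' budget: just as in Lemma~\ref{lem:staircase_value}, every reasonable layout satisfies $\sum_{i=1}^{n-1}(\Delta X_i + \Delta Y_i) \le (w-1) + (h-1)$, so $\gamma^\star \le \frac{w+h-2}{n-1} \le \frac{h}{n-1} = \frac1k$ up to lower-order terms; more usefully, within any contiguous band of squares spanning vertical extent $H$ the local gap is at most roughly $\frac{w-1+H}{(\text{number of squares in the band})-1}$, which is what lets the doubling appear.

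The core construction: partition $[ \tfrac12, h-\tfrac12]$ into consecutive intervals $I_1, I_2, \dots$ so that each $I_j$ contains a maximal run of consecutive $y$-values fitting inside a window of height $1$ (so the corresponding squares are all stabbed by a common horizontal line, and in fact the block of squares plus a unit of slack fits in a $w \times 2$ sub-strip). Within block $j$ with $n_j$ squares, run the point-stabbed procedure from Lemma~\ref{lem:free_y_opt_value} and the remark after it: solve the LP~(\ref{eq:lp}) restricted to that block's $\Delta y_i$'s with horizontal budget $w-1$, obtaining a proper staircase (all facing, say, up-and-right) on block $j$ with gap arbitrarily close to $g_j^\star = \frac{w+h_j-2}{n_j-1}$-ish, where $h_j$ is the block's height. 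Then place the blocks one above the other in $y$-order. The squares within one block only see each other (they are far from other blocks in $y$ unless adjacent), and the one ``seam'' between consecutive blocks is handled by noting that the top square of block $j$ and the bottom square of block $j+1$ differ in $y$ by more than... well, they might be close, so here one either (i) alternates the facing direction of consecutive staircases (zigzag) so the seam squares stick out on opposite sides, or (ii) argues the seam square still keeps its full left and bottom (or left and top) sides visible because nothing in the other block is both horizontally and vertically overlapping it.

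The number-theoretic heart is the aggregation: the global supremum gap $\gamma^\star$ is governed by the worst block, i.e.\ $\gamma^\star \le \min_j g_j^\star + O(1/k)$ essentially, while our construction achieves $\min_j (g_j^\star)$ within each block; the factor of $2$ comes in because when we fix the block boundaries obliviously (without knowing $\A$), a run of squares that $\A$ packs into one $y$-window of height~$1$ may get split by our grid across two adjacent blocks, halving the effective number of squares per block in the worst case and hence doubling the per-square horizontal share needed. Quantifying this ``a good layout's tight band is split by at most one grid line, losing at most a factor $2$, plus an additive $O(1/k)$ from the at most $O(k)$ block-boundary corrections'' is, I expect, the main obstacle and the place where the $\frac12 - O(\frac1k)$ must be pinned down carefully; the $O(n\log n)$ running time is then immediate since each block's LP is solved in $O(n_j \log n_j)$ time by the water-filling remark and $\sum_j n_j = n$, plus an initial sort of the $y_i$.
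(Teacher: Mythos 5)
Your high-level decomposition---bucket the squares into windows of height $2$, solve each bucket with the staircase/LP machinery of Section~\ref{sec:2x2case}, and recombine with alternating facing directions---is the same skeleton as the paper's proof, and the running-time accounting is fine. However, there is a genuine gap exactly at the step you flag as ``the main obstacle'' and then leave open, and your accounting of where the factor $2$ arises is backwards. The upper bound loses nothing: since any global layout restricted to the squares of one bucket is a layout of that sub-instance, $\gamma^\star$ is at most the smallest bucket supremum $\delta$ (up to arbitrarily small error), and $\delta=O(\frac1k)$ because some bucket holds $\Omega(k)$ squares. No ``band-splitting'' penalty enters here; if your construction really achieved $\min_j g_j^\star$ per block, you would have a $1$-approximation. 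The factor $\frac12$ is lost entirely in the \emph{construction}: stacking full-width per-bucket staircases fails because squares near a seam are overlapped, both horizontally and vertically, by squares of the adjacent bucket (and not only by the single top/bottom seam pair). Neither of your proposed fixes closes this: alternating the facing direction alone does not help, because a staircase facing up-and-right terminates at the far right exactly where the adjacent staircase facing up-and-left begins, so the seam squares can still coincide in $x$; and the claim that a seam square ``keeps its full left and bottom sides visible'' is false for the same reason.

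The missing idea is the \emph{squeezing} that gives the algorithm its name: after alternating the facing directions, multiply every $x$-gap in every bucket by $\frac{1-\delta}{2}$ and align even buckets to one side of the strip and odd buckets to the other, so that the two families of staircases occupy horizontally disjoint regions separated by at least $\delta$ (in the paper's coordinates, even-bucket squares have $x\leq 1-\frac{\delta}{2}$ and odd-bucket squares have $x\geq 1+\frac{\delta}{2}$). Then no square is ever covered by a square from an adjacent bucket on the side it relies on: each non-top square of a bucket retains gap $g_y+g_x\frac{1-\delta}{2}\geq\delta\frac{1-\delta}{2}$, and each bucket's top square retains an $x$-gap of at least $\delta$ by the horizontal separation. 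This halving of the $x$-gaps is the sole source of the factor $\frac12-O(\frac1k)$, combined with $\gamma^\star\lessapprox\delta=O(\frac1k)$. To repair your argument, drop the band-splitting analysis, keep the clean per-bucket upper bound, and supply the horizontal compression step. (A minor further inaccuracy: the per-bucket supremum is the optimum of the LP~(\ref{eq:lp}) for that bucket's $\Delta y_i$'s, not $\frac{w+h_j-2}{n_j-1}$; the latter is only an upper bound, attained under uniform spacing.)
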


\begin{proof}
We partition the squares into buckets $1,\ldots,\lceil h \rceil$, where bucket $i$ contains the squares $j$ such that $y_j$ rounds to $i$ (we round up in case of a tie). The squares within each bucket are in a strip of height $2$, and by Section~\ref{sec:2x2case}, a (staircase) solution of gap arbitrarily close to the supremum can efficiently be found, in time  $O(\ell \log \ell)$ per bucket, where $\ell$ is the number of squares in that bucket. Hence, the total time required is $O(n \log n)$.

The smallest bucket gap $\delta$ is (up to arbitrarily small error) an upper bound for $\gamma^\star$, as each layout contains a sublayout for the squares in this worst bucket. We also note that $\delta=O(\frac1k)$, since there must be a bucket with $\Omega(k)$ squares to which Lemma~\ref{lem:staircase_value} applies.

In $O(n)$ time, we now construct a layout for all squares, of gap roughly $\frac\delta2$, to prove the statement. To do so, we ``squeeze'' the layouts in individual buckets appropriately.  We assume w.l.o.g.\ that the even bucket staircases are facing up and right, while the odd ones are facing up and left, as in Figure~\ref{fig:ne_nw} (left).

\begin{figure}[htbp]
    \begin{minipage}[t]{.49\textwidth}
        \centering
        \includegraphics[scale=1]{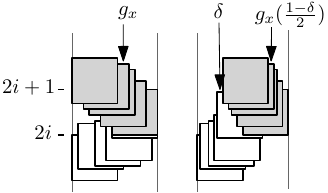}
        \caption{Squeezing staircase layouts.}
        \label{fig:ne_nw}
    \end{minipage}
    \hfill
    \begin{minipage}[t]{.49\textwidth}
        \centering
        \includegraphics[scale=1]{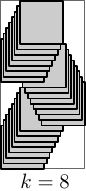}
        \caption{Arranging $n$ uniformly spaced squares in a zigzag layout.}
        \label{fig:uniform_placing}
    \end{minipage}
\end{figure}

Multiplying all $x$-gaps by $\frac{1-\delta}{2}$ while keeping the even staircases aligned left and the odd ones aligned right, see Figure~\ref{fig:ne_nw} (left), leads to a layout where even staircase squares have $x\leq 1-\frac\delta2$, and odd ones have $x\geq 1+\frac\delta2$. Each non-top square of each bucket still has gap $g_y+g_x\frac{1-\delta}{2}$ where $g_x,g_y$ are the previous $x$-gap and $y$-gap in the bucket solution, and $g=g_x+g_y\geq \delta$ is the previous gap. It follows that the new gap is at least $\delta\frac{1-\delta}{2}$. The top squares of each bucket have $x$-gap (and hence total gap) at least $\delta$, by construction.
The resulting layout has therefore gap at least $\delta\frac{1-\delta}{2}$. Since $\gamma^\star \lessapprox \delta=O(1/k)$, the bound follows.
\end{proof}

It is natural to ask whether squeezing the staircase layouts of individual buckets is the best we can do. For general $y_i$, we do not know the answer, but if the $y_i$ are uniformly spaced, we can indeed prove that this procedure yields an asymptotically optimal gap.

\subparagraph*{Uniform spacing.} For the rest of the section, we assume that $y_{i+1}-y_i=\frac1k$ for $1\leq i<n$.  In this case, the squeezing algorithm from Theorem~\ref{thm:main_approx} essentially produces the \emph{zigzag} layout (see Figure~\ref{fig:uniform_placing}).

\begin{lemma}
  The zigzag layout has gap
  $
    \frac1k+\frac{1}{2k-1}.
  $
\end{lemma}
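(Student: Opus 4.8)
The plan is to compute the visible perimeter of each square in the zigzag layout directly. Recall (Figure~\ref{fig:uniform_placing}) that for the uniformly spaced instance the zigzag layout breaks the squares, in $y$-order, into consecutive blocks of $k$ squares, and lays out block $b$ as a staircase facing up-and-right if $b$ is even and up-and-left if $b$ is odd, with every consecutive $x$-gap equal to $\frac{1}{2k-1}$, so that each block occupies a substrip of width $\frac{k-1}{2k-1}\le\frac12$; the even blocks are pushed against the left wall and the odd blocks against the right wall, and the global stacking order is by increasing $y$. I would first record two cheap facts. (i) Since the stacking order is by increasing $y$, every square in front of a square $s$ lies strictly above it, so nothing covers the bottom edge of $s$; hence that edge is always fully visible and $\mathrm{gap}(s)$ equals the visible length of $s$'s left edge, plus that of its right edge, plus that of its top edge, minus $1$. (ii) A square of block $b$ can be overlapped only by squares of blocks $b-1$, $b$ and $b+1$, because blocks two apart in $y$-order are more than $1$ apart in $y$; of these, block $b-1$ lies entirely below — hence behind — $s$, so only blocks $b$ and $b+1$ can hide anything, and block $b+1$ sits against the opposite wall.

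Next I would dispose of the \emph{generic} squares, meaning those that are not the topmost square of their block. For such an $s$, the next square $s'$ of the same block sits one $y$-step ($\tfrac1k$) above $s$ and one $x$-step ($\tfrac1{2k-1}$) toward the interior, and an elementary computation shows $s'$ hides all but $\tfrac1k$ of the side of $s$ it moves toward, all but $\tfrac1{2k-1}$ of $s$'s top edge, and nothing of $s$'s far side. One then checks block $b+1$ contributes nothing new: it is too far toward the opposite wall to reach $s$'s far side; it is too far up and too far toward the opposite wall to bite into the still-visible $\tfrac1{2k-1}$-sliver of $s$'s top edge (this is exactly the inequality $2pd\le 1$ for $d=\tfrac1{2k-1}$ and every block position $p\le k-1$); and whatever it hides on the interior-facing side of $s$ is already hidden by $s'$. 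So the visible lengths of the bottom, far side, near side, and top of $s$ are $1$, $1$, $\tfrac1k$, $\tfrac1{2k-1}$, whence $\mathrm{gap}(s)=\tfrac1k+\tfrac1{2k-1}$.

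It remains to show nothing does worse. For the top square $t$ of a block, all of block $b+1$ shingles over $t$ from above and from the opposite wall. I would show the union of the shadows cast on $t$ by the relevant $k-1$ squares of block $b+1$ leaves a top-edge sliver of length $1-(2k-3)\cdot\tfrac1{2k-1}=\tfrac2{2k-1}$ visible, hides only the top $1-\tfrac1k$ of the side of $t$ facing block $b+1$, and misses $t$'s far side and bottom entirely; hence $\mathrm{gap}(t)=\tfrac1k+\tfrac2{2k-1}>\tfrac1k+\tfrac1{2k-1}$. Squares one or two from a block's top are handled the same way with strictly less interference from block $b+1$; the first and last blocks only ever make more of a square visible; and the global top square has visible perimeter $4$. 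The minimum over all squares is therefore $\tfrac1k+\tfrac1{2k-1}$, attained by every generic square.

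The hard part is the bookkeeping for the top square of a block: one must check that the $k-1$ partial shadows cast on $t$ by the next block — each shifted one more $x$-step off the wall than the previous — tile $t$'s top edge contiguously inward from the opposite wall, neither leaving an extra gap between consecutive shadows nor over-reaching past the $\tfrac2{2k-1}$-sliver, and that the prescribed step $d=\tfrac1{2k-1}$ is precisely what keeps this sliver nonnegative while holding every generic gap at $\tfrac1k+d$. Everything else is a routine case check.
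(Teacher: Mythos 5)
Your proof is correct and follows essentially the same route as the paper's: both argue that in the explicit zigzag layout every square keeps, beyond two fully visible sides, a vertical sliver of length $1/k$ (the $y$-gap) and a horizontal sliver of length $1/(2k-1)$ (the $x$-gap), with both bounds attained by a generic square near the bottom. You simply work this out in far more detail than the paper's terse argument --- in particular the check that the opposite-wall block never reaches a square's visible slivers (your inequality $2pd\le 1$) and the separate accounting for the top square of each block --- but the underlying computation is the same.
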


\begin{proof}
  We place bundles of $\lfloor k \rfloor$
  squares each, as indicated in Figure~\ref{fig:uniform_placing}, starting from the lowest one. This layout uses precisely the $2\lfloor k\rfloor$ $x$-coordinates
  $\frac12 + \frac{i}{2\lfloor k \rfloor-1}, i=0,\ldots, 2\lfloor k\rfloor-1$. This means that every square has $x$-gap at
  least $\frac{1}{2\lfloor k \rfloor-1} \geq \frac{1}{2k-1}$. The  $y$-gap is at least $1/k$ for
  each square, due to uniform spacing. Both gaps are attained for
  example by the second-lowest square, so the bound in the lemma cannot
  be improved for this layout.
\end{proof}

Below, we will establish the following result, showing that the simple zigzag layout is asymptotically optimal.
\begin{theorem}\label{thm:main_uniform}
  In the case of uniform spacing, every layout has gap at most
  $
    \frac1k+\frac{1}{2k-O(\log k)}.
  $
\end{theorem}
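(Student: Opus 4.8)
The plan is to bound the gap of an arbitrary layout $\A$ by adapting the water-filling argument from Lemma~\ref{lem:staircase_value} and Lemma~\ref{lem:free_y_opt_value}, but being much more careful about how the $\Delta Y_i$ rectangles decompose into blocks. Recall that each non-top square $t_i$ sticks out of the bounding box $\tau_{i+1}$ of the squares in front of it by $\Delta X_i$ horizontally and $\Delta Y_i$ vertically, with $\Delta X_i + \Delta Y_i \geq \gamma$, and that the $\Delta X_i$-intervals are pairwise disjoint and disjoint from a unit interval, so $\sum_i \Delta X_i \leq w-1 \leq 1$. Likewise the $\Delta Y_i$-intervals are pairwise disjoint and disjoint from a unit interval, so $\sum_i \Delta Y_i \leq h-1 = (n-1)/k$. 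Each $\Delta Y_i$ decomposes into blocks, each block being a gap $\Delta y_j = 1/k$ between vertically consecutive sides, and collectively the $n-1$ blocks are used exactly once. So $\sum_i \Delta Y_i = (n-1)/k$ always, and the only freedom is \emph{how the $n-1$ blocks are distributed} among the $n-1$ rectangles.

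The heart of the matter is the following: if all $\Delta Y_i$ equalled $1/k$ (one block each), then $\Delta X_i \geq \gamma - 1/k$ for all $i$, summing to $(n-1)(\gamma-1/k) \leq w-1 \leq 1$, giving $\gamma \leq 1/k + 1/(n-1) = 1/k + 1/(2k-1) \cdot \frac{2k-1}{n-1}$ — which is not quite the target but is the right flavour. The improvement to $2k - O(\log k)$ in the denominator must come from the observation that the blocks \emph{cannot} all be spread out evenly: in a layout with small gap, many squares must stick out by nearly $\gamma$ horizontally (consuming horizontal budget $\approx 1$), and those squares then contribute essentially nothing vertically, which forces the vertical blocks to pile up onto few rectangles — but a rectangle holding $m$ blocks has $\Delta Y_i = m/k$ and hence "wastes" vertical extent while only relieving the horizontal constraint for one index. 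First I would set up the accounting precisely: let $a_i \geq 0$ be the number of blocks in $\Delta Y_i$ (so $\sum a_i = n-1$), giving $\Delta Y_i = a_i/k$ and thus $\Delta X_i \geq \gamma - a_i/k$ whenever the right side is positive. Summing over the indices where $a_i < \gamma k$ yields
\begin{equation*}
  \sum_{i:\, a_i < \gamma k} \Bigl(\gamma - \frac{a_i}{k}\Bigr) \;\leq\; \sum_{i=1}^{n-1} \Delta X_i \;\leq\; w-1 \;\leq\; 1.
\end{equation*}
Let $S = \{i : a_i < \gamma k\}$ and $|S| = s$. Then $s\gamma - \frac1k\sum_{i\in S} a_i \leq 1$, so $\sum_{i \in S} a_i \geq k(s\gamma - 1)$.

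Now I would extract the combinatorial constraint that limits how concentrated the blocks can be. The indices \emph{not} in $S$ have $a_i \geq \gamma k$, and there are $n-1-s$ of them, so $\sum_{i\notin S} a_i \geq (n-1-s)\gamma k$. Combined with $\sum_{i\in S}a_i \geq k(s\gamma-1)$ and the exact identity $\sum_i a_i = n-1$, we get $n-1 \geq k(s\gamma - 1) + (n-1-s)\gamma k = (n-1)\gamma k - \gamma k - sk + s\gamma k$... wait, I must be careful: this over-counts, since both sums are over disjoint index sets but I've double-used the $S$ indices' lower bound only once — actually that's fine. The resulting inequality is $\gamma \leq \frac{(n-1) + sk - s\gamma k}{(n-1-s)k + (s\gamma k - ... )}$; this is getting delicate, and it only uses averages, which I suspect is still too weak. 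The genuine source of the $\log k$ must be sharper: there are at most $n-1$ blocks total but a \emph{single} rectangle $\Delta Y_i$ can hold at most roughly $h$-many... no — it can hold arbitrarily many blocks. The right idea, I believe, is that the blocks piled onto one rectangle $\Delta Y_i$ must be \emph{consecutive} $\Delta y_j$'s in a sense dictated by the geometry (the bounding-box nesting structure), and a more refined charging shows that to make $r$ of the rectangles have $\Delta X_i$ small (close to $\gamma$), one needs the vertical blocks to form a structure costing $\Omega(r\log r)$-type overhead.

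Here is the cleaner route I would actually pursue. Treat the LP relaxation: by Lemma~\ref{lem:free_y_opt_value} the gap of any layout is at most the optimum $g^\star$ of the LP~(\ref{eq:lp}) with $\Delta y_i = 1/k$, $w-1 = 1$. But that LP optimum is exactly $\frac1k + \frac{1}{n-1}$ — too large. So the layout's gap is strictly smaller than $g^\star$ in general, and the point of the theorem is that the \emph{reshuffling} step in Lemma~\ref{lem:free_y_opt_value}, which took a layout to an LP solution, is \emph{not} reversible in a way that preserves optimality; equivalently, the geometric layout is constrained beyond the LP. Concretely: in the proof of Lemma~\ref{lem:free_y_opt_value} the water level $\bar\gamma = \min_i(\Delta X_i + \Delta Y_i)$ is at least $\gamma$, and after moving blocks into free slots the level only rises; the LP bound $g^\star \geq \gamma$ is thus loose precisely by the total "lift" gained from consolidating blocks. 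To upper-bound $\gamma$ I would instead argue the converse direction: the \emph{multiset} $\{a_1,\dots,a_{n-1}\}$ of block-counts is constrained because of the way bounding boxes nest — specifically, I claim the number of indices $i$ with $a_i = 0$ is at most $O(\log k)$ times the number with $a_i \geq 1$, or some such bound, since each "empty" rectangle corresponds to a square that sticks out only horizontally, and between two horizontal-only squares in stacking order the vertical structure can accommodate only a bounded number of them per "scale". I would make this precise by a potential/recursion argument on the bounding boxes: define $T(H)$ as the max number of stacking-order steps during which the vertical bounding-box extent is $H$ while keeping all gaps $\geq \gamma$; a doubling argument on $H$ should give $T(H) = O(\log(kH))$ per unit of horizontal budget consumed.

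The main obstacle, and where I would spend the most effort, is exactly this last combinatorial-geometric claim: translating "the $\Delta Y_i$ rectangles tile the vertical extent using the $n-1$ unit-gap blocks, respecting bounding-box nesting" into a quantitative statement that at most $O(\log k)$-many of the $n-1$ rectangles can be "cheap" (small $\Delta X_i$) per unit of consumed horizontal width. I expect the proof to set up a charging scheme or a recursion over nested bounding boxes — something like: if $r$ squares in the stacking order each stick out horizontally by $\geq \gamma - 1/k$, then the topmost $r$-square sub-bounding-box already has vertical extent $\geq (\text{something like } \log_2 r)/k$ because the vertical blocks those squares displace cannot overlap, and consolidating all of them into valid $\Delta Y_i$'s of total $\leq h-1$ forces $r \leq 2k - O(\log k)$. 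Once that bound on $r$ is in hand, combined with $r(\gamma - 1/k) \leq w - 1 \leq 1$ one gets $\gamma \leq 1/k + 1/r \leq 1/k + 1/(2k - O(\log k))$, which is the theorem. The routine parts — verifying the $\Delta X_i+\Delta Y_i \geq \gamma$ inequalities, the disjointness of intervals, and the final arithmetic — carry over verbatim from Lemma~\ref{lem:staircase_value}.
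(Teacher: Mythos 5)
Your proposal rests on carrying the $\Delta X_i+\Delta Y_i\geq\gamma$ accounting of Lemma~\ref{lem:staircase_value} and the LP of Lemma~\ref{lem:free_y_opt_value} over to the tall strip, but that machinery is only valid when $w,h\leq 2$, i.e.\ when all squares are pairwise overlapping (stabbed by a point). That is what makes Proposition~\ref{prop:bottom_must_be_leftright} available and guarantees that the part of $t_i$ inside the bounding box $\tau_{i+1}$ is actually covered. In a strip of arbitrary height a square can sit inside $\tau_{i+1}$ without touching any square in front of it, so its gap is not bounded by how much it sticks out, and the inequality $\Delta X_i+\Delta Y_i\geq\gamma$ simply fails. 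You can see the approach is unsound from your own intermediate conclusion: you assert that every layout has gap at most the LP value $\frac1k+\frac1{n-1}$, yet the zigzag layout achieves gap $\frac1k+\frac1{2k-1}$, which exceeds $\frac1k+\frac1{n-1}$ as soon as $n-1>2k-1$ --- and in the tall strip $n-1=k(h-1)$ is unbounded. So the ``LP relaxation'' is not a relaxation here at all. The paper instead localizes: it only ever applies the point-stabbed analysis inside vertical windows of bounded height, where mutual overlap is restored.

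The second, and decisive, gap is that the actual source of the $O(\log k)$ term is absent from your argument. You correctly sense that some combinatorial-geometric constraint must limit how many squares can be ``cheap,'' but you only speculate about a ``potential/recursion argument'' without supplying it. The paper's mechanism is different and concrete: it defines \emph{bad squares} (at least two covered corners), shows every non-standard bad square charges an adjacent \emph{standard} bad square, and bounds the number of standard bad squares per unit window by $2(\log k+1)$ via an exponential-growth argument on the visible portions $\sigma_i$ of their horizontal sides ($\sigma_\ell\geq 2^{\ell-1}/k$ forces $\ell\leq\log k+1$). Removing the $O(\log k)$ bad squares per window then forces a bitone stacking order (Lemma~\ref{lem:monotone}), after which a window of height $4$ contains $2k-O(\log k)$ surviving squares whose $x$-gaps are pairwise disjoint and sum to at most $1$, giving minimum $x$-gap at most $1/(2k-O(\log k))$. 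Note also that your final deduction is inverted: from $r(\gamma-\frac1k)\leq 1$ you need a \emph{lower} bound $r\geq 2k-O(\log k)$ on the number of squares with disjoint horizontal stick-outs to conclude $\gamma\leq\frac1k+\frac1{2k-O(\log k)}$; the upper bound $r\leq 2k-O(\log k)$ that you propose to establish would only weaken the conclusion.
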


In proving this, we can restrict to $1/k$-reasonable layouts, the ones achieving gap larger than $1/k$ in the first place. We also assume that $k\geq 2$.

We will start by establishing a crucial fact about such layouts, namely that most of their squares have 3 visible corners. To this end, we are going to upper-bound the number of squares with at least 2 covered corners, eventually enabling us to remove them from the layout while keeping most of the squares.

\begin{definition}
  Given a layout, a \emph{bad square} is one with at least 2 covered corners. A bad square with one vertical side covered is a \emph{standard bad square}; see Figure~\ref{fig:Case2left} (left).
\end{definition}

\begin{lemma}\label{lem:ad_sq}
If a square $s$ has both adjacent squares (in the $y$-order) in front of it, then $s$ is a standard bad square.
\end{lemma}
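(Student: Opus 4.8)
The plan is to determine exactly what the two $y$-neighbours of $s=s_i$ cover of $s$, and to show that together they block a whole vertical side of $s$ — except in one configuration that I will rule out via reasonableness. Write $s_{i-1}$ for the square directly below $s$ and $s_{i+1}$ for the one directly above; by hypothesis both are in front of $s$. Two observations set everything up. First, by uniform spacing $y_i-y_{i-1}=y_{i+1}-y_i=\tfrac1k$, so $s_{i-1}$ overlaps $s$ in a horizontal strip of height $1-\tfrac1k$ along the bottom of $s$, and $s_{i+1}$ in a strip of height $1-\tfrac1k$ along the top; since $k\ge 2$ we have $1-\tfrac1k\ge\tfrac12$, so these two strips together span the full unit height of $s$. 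Second, because $w\le 2$ and all centroids lie in $[\tfrac12,\tfrac32]$, any two squares overlap in their $x$-projections; hence $s_{i-1}$ covers a full-height-of-its-strip rectangle in the bottom-left corner of $s$ when $x_{i-1}\le x_i$ and in the bottom-right corner when $x_{i-1}\ge x_i$, and symmetrically $s_{i+1}$ covers such a rectangle in the top-left or top-right corner of $s$.

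First I would handle the case where the two neighbours lie weakly on the same horizontal side of $s$, say $x_{i-1}\le x_i$ and $x_{i+1}\le x_i$ (the other side being mirror-symmetric). Then $s_{i-1}$ reaches the left edge of $s$ and covers its lower $1-\tfrac1k$ portion, while $s_{i+1}$ covers its upper $1-\tfrac1k$ portion; as $2(1-\tfrac1k)\ge 1$, the entire left side of $s$ is covered. Its two endpoints are then covered corners, so $s$ is a bad square whose vertical side is covered, i.e.\ a standard bad square. Ties ($x_{i-1}=x_i$ or $x_{i+1}=x_i$) and squares that merely touch in $x$ fall under this case as well, since they only make more of $s$ covered.

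The only remaining configuration is that the neighbours lie strictly on opposite horizontal sides of $s$, say $x_{i-1}<x_i<x_{i+1}$ (the mirror case is identical). Here $s_{i-1}$ covers only an ``L''-shaped region hugging the bottom-left corner of $s$ and $s_{i+1}$ one hugging the top-right corner, so neither vertical side of $s$ is fully covered by the neighbours — a pure covering argument genuinely fails here. I would instead exclude this configuration using $1/k$-reasonableness: bound the visible perimeter of $s$ from above by what $s_{i-1}$ and $s_{i+1}$ alone leave visible on each side (further squares in front only shrink this), namely at most $\tfrac1k$ on the left side, $\tfrac1k$ on the right, $x_i-x_{i-1}$ on the bottom, and $x_{i+1}-x_i$ on the top. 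Summing gives visible perimeter at most $\tfrac2k+(x_{i+1}-x_{i-1})\le\tfrac2k+1\le 2$ for $k\ge 2$ (using $x_{i\pm1}\in[\tfrac12,\tfrac32]$), so $s$ has non-positive gap, contradicting that the layout is $1/k$-reasonable. Hence this configuration cannot occur, and in every case $s$ is a standard bad square.

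The main obstacle is this last configuration: recognizing that ``opposite sides'' is qualitatively different — the two neighbours no longer stack up to shield a full side — and that it must be killed globally, via the gap bound, rather than locally by covering. The only other care needed is with closed-square boundaries and $x$-coordinate ties, but these merely enlarge the covered set and are absorbed into the first case.
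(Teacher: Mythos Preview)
Your proof is correct and follows the same case split as the paper: same-side neighbours cover a full vertical side of $s$, while opposite-side neighbours are ruled out via reasonableness. The only difference is in how you dispose of the opposite-side case. The paper simply observes that when $x_{i-1}<x_i<x_{i+1}$ (and symmetrically), $s$ lies in the bounding box of $s_{i-1}$ and $s_{i+1}$; since $k\ge 2$ these two squares overlap vertically, so Proposition~\ref{prop:bottom_must_be_leftright} applied to the sublayout $\{s_{i-1},s,s_{i+1}\}$ immediately gives that $s$ has no gap. You instead recompute the visible-perimeter bound $\tfrac2k+(x_{i+1}-x_{i-1})\le\tfrac2k+(w-1)\le 2$ by hand, which is exactly the computation hiding inside that proposition's proof specialized to three squares. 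So your route is not genuinely different, just more self-contained; citing the proposition is shorter and reuses established structure. One small remark: your bound already yields gap $\le 0$, so it contradicts plain reasonableness, not merely $1/k$-reasonableness.
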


\begin{proof}
  If the adjacent squares both have smaller or larger $x$-coordinate, then they together hide a vertical side of $s$, see Figure~\ref{fig:Case2left} (b). The other case cannot happen in a reasonable layout by Proposition~\ref{prop:bottom_must_be_leftright}; see Figure~\ref{fig:Case2left} (c)-(d). Since  $k\geq 2$,  the adjacent squares actually overlap vertically.
\end{proof}

\begin{figure}[htbp]
\centering
\includegraphics[page=2]{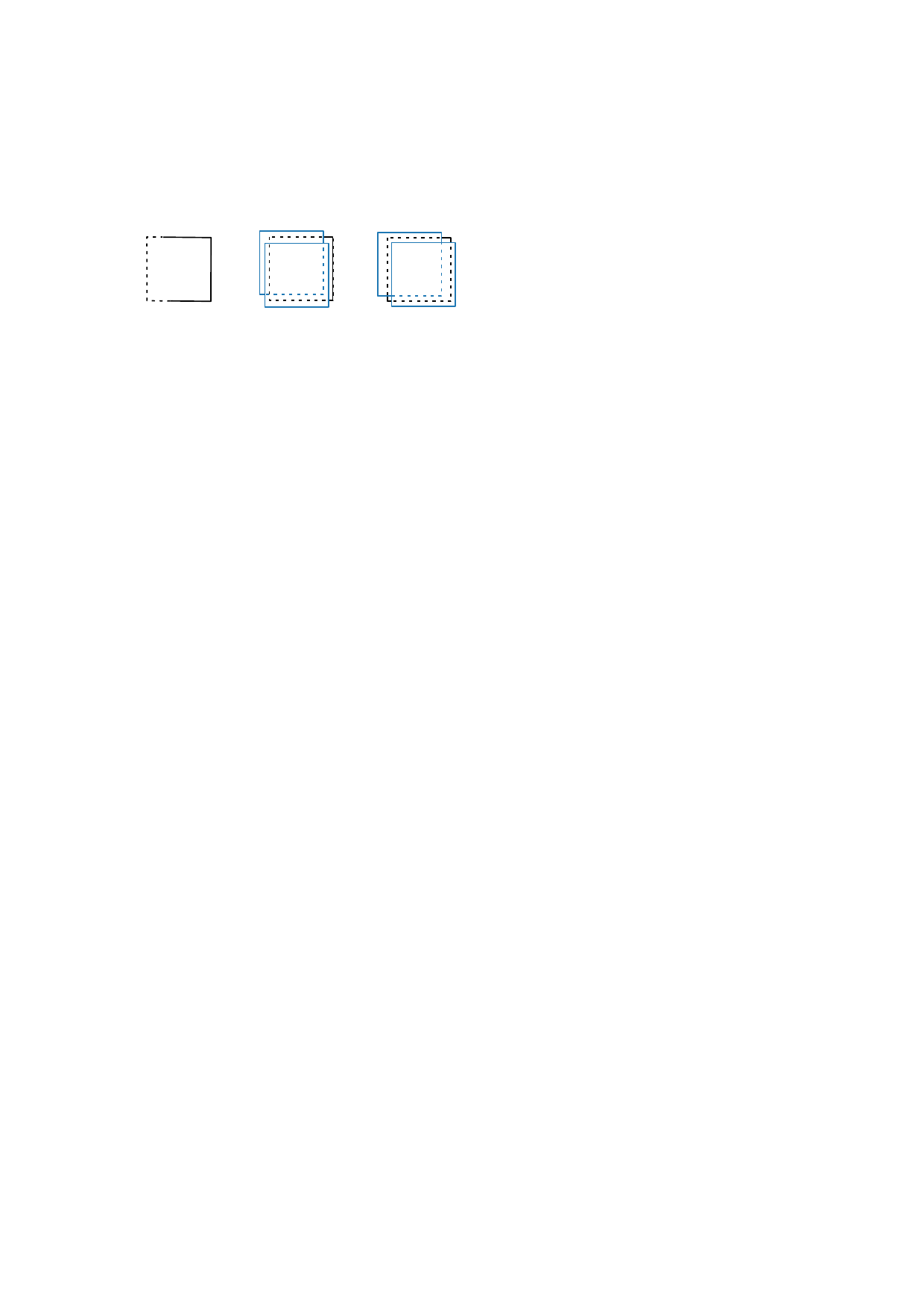}
\caption{Bad squares: at least two covered corners; A standard bad square ((a) and (b)): one vertical side is covered. \label{fig:Case2left}}
\end{figure}

Counting standard bad squares yields a bound for all bad squares.

\begin{lemma}
For each non-standard bad square, an adjacent square (in the $y$-order) is a standard bad square.
\end{lemma}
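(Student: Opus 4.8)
The plan is to analyze a non-standard bad square $s=s_i$ according to which of its corners are covered, and in each case to exhibit one of its $y$-neighbours $s_{i-1},s_{i+1}$ as standard bad. Since $s$ has at least two covered corners but no vertical side fully covered, up to the horizontal and vertical reflection symmetries of the layout there are three configurations to treat: (I) both top corners of $s$ are covered; (II) the top-left and bottom-right corners of $s$ are covered; (III) both left corners of $s$ are covered but its left side is not fully covered. In each case I will pick a neighbour $t$ of $s$ lying behind $s$ and argue that the union of the squares in front of $t$ covers a whole vertical side of $t$, so $t$ is standard bad. This covering will always be assembled from $s$ itself — which, being one $y$-step from $t$ (hence overlapping $t$ vertically, since $k\ge2$) and at most $w-1\le1$ away in $x$ (since $w\le2$), covers one bottom or top corner of $t$ — together with one of the squares that cover a corner of $s$. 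The work is to locate that second square and to exclude the possibility that it covers the \emph{wrong} corner of $t$.

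For Case (I) I would use $1/k$-reasonability directly. As both top corners of $s$ are covered and $w\le2$, the two covering squares cover the entire top side of $s$; hence the visible perimeter of $s$ is supported only by its left, right and bottom sides, and since the gap exceeds $1/k$ each of these three unit-length sides is visible over length more than $1/k$. It follows, first, that $s_{i+1}$ lies behind $s$: if it were in front it would block one vertical side of $s$ over length at least $1-1/k$, leaving it visible over at most $1/k$. Second, by the same count the square covering the top-left (resp.\ top-right) corner of $s$ lies at least two $y$-steps above $s$, hence above $s_{i+1}$, hence — since $s_{i+1}\prec s$ — in front of $s_{i+1}$. Now $s$ covers the bottom-left corner of $s_{i+1}$ when $x_{i+1}\ge x_i$ and the bottom-right one otherwise; correspondingly, the square covering the top-left (resp.\ top-right) corner of $s$ also covers the top-left (resp.\ top-right) corner of $s_{i+1}$, and the two $y$-intervals meet because that coverer sits at height at most $y_i+1$. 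So a whole vertical side of $s_{i+1}$ is covered and $s_{i+1}$ is standard bad.

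For Cases (II) and (III) the perimeter count is too weak, so I would invoke Lemma~\ref{lem:ad_sq}: since $s$ is not standard bad, not both of its neighbours lie in front of it, so one — say $s_{i+1}$ — lies behind it (the subcase ``$s_{i-1}$ behind $s$'' is symmetric, using the bottom covered corner of $s$ in place of a top one). Then $s$ covers a bottom corner of $s_{i+1}$, and the square $u$ that covers the top covered corner of $s$ lies strictly above $s_{i+1}$ and, since $s_{i+1}\prec s\prec u$, in front of $s_{i+1}$. A short split on the $x$-order of $u,s_{i+1},s$ then shows that either $u$ covers the matching top corner of $s_{i+1}$ — so again a whole vertical side of $s_{i+1}$ is covered and $s_{i+1}$ is standard bad — or $u$ covers the other top corner of $s_{i+1}$, in which case $u,s,s_{i+1}$ form a reasonable three-square sublayout whose bottom square $s_{i+1}$ is contained in the bounding box of $\{u,s\}$, contradicting Proposition~\ref{prop:bottom_must_be_leftright}.

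The main obstacle will be exactly this last dichotomy in Cases (II)/(III). In Case (I), $1/k$-reasonability fixes both the stacking position of $s_{i+1}$ and the heights of the corner-covering squares by a clean perimeter count; in the diagonal case neither is available, so the stacking order must be extracted from Lemma~\ref{lem:ad_sq} and one must then simultaneously track the relative $x$-positions of $s$, its behind-neighbour, and the corner-covering square. The delicate point is precisely the configuration in which the two corners of $s_{i+1}$ that do get covered fail to lie on a common vertical side; ruling it out is where Proposition~\ref{prop:bottom_must_be_leftright} enters. Finally, since the construction always produces a neighbour with a \emph{fully} covered vertical side, the argument is robust to whether ``standard bad'' is read as ``some vertical side is fully covered'' or as ``two covered corners share a vertical side''.
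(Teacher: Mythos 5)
Your proof is correct and follows essentially the same route as the paper's: a case split on which corners of the non-standard bad square are covered, with Lemma~\ref{lem:ad_sq} supplying a $y$-neighbour behind $s$, the ``top side fully covered'' perimeter count forcing that neighbour behind $s$ in the two-upper-corners case, and Proposition~\ref{prop:bottom_must_be_leftright} ruling out the configuration in which that neighbour is wedged between $s$ and the corner-covering square. Your three-way split merely separates the paper's first case (an upper and a lower corner covered, whether adjacent or antipodal) into two identical subarguments, and your Case~(I) spells out which corners of the neighbour get covered where the paper only asserts the conclusion.
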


\begin{proof}
Let $s$ be a non-standard bad square. We distinguish two cases.

The first one is that an upper corner and a lower corner of $s$ are covered. These could be adjacent corners (with some part of the connecting side visible), or antipodal corners as in Figure~\ref{fig:Case3left}. In both of those instances, by Lemma~\ref{lem:ad_sq}, one of the adjacent squares must be behind $s$; w.l.o.g.\ it is the next higher one $b$ (blue).

\begin{figure}[htbp]
    \begin{minipage}[t]{.57\textwidth}
        \centering
        \includegraphics[scale=1]{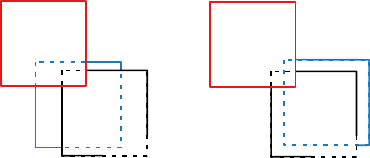}
        \caption{Case 1: A non-standard bad square (black) with an upper and a lower corner covered.}
        \label{fig:Case3left}
    \end{minipage}
    \hfill
    \begin{minipage}[t]{.39\textwidth}
        \centering
        \includegraphics[scale=1]{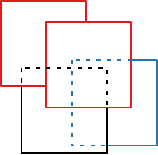}
        \caption{Case 2: A non-standard bad square (black) with two upper or two lower corners covered.}
        \label{fig:Case5left}
    \end{minipage}
\end{figure}

Consider the square $r$ (red) covering the upper corner. Square $b$ is behind $s$ and $r$, and either ``wedged'' between them (w.r.t.~to both $x$- and $y$-coordinate), or ``sticking'' out.
The former case (Figure~\ref{fig:Case3left} left) cannot happen, because $b$ would have no gap then, see Proposition~\ref{prop:bottom_must_be_leftright}. In the latter case, $b$ is the required standard bad square (Figure~\ref{fig:Case3left} right). This uses that $r$ is higher than $b$ due to uniform spacing.

The second case is that two upper or two lower corners of $s$ are covered, see  Figure~\ref{fig:Case5left}. Let us suppose w.l.o.g.\ that the two upper corners are covered. Then the upper side of $s$ is covered. This implies that the next higher square $b$ is behind $s$, as otherwise, $s$ has gap at most $1/k$. Again, $b$ is a standard bad square.
\end{proof}

Through the previous lemma, each standard bad square is ``charged'' by at most three bad squares (itself and the two adjacent ones).

\begin{corollary}\label{cor:standard_bad}
For every vertical window $W=[\underline{h},\overline{h}]\subseteq[\frac12,h-\frac12]$ of a $1/k$-reasonable layout, the number of bad squares with $y_i\in W$ is at most three times the number of standard bad squares with $y_i\in W'=[\underline{h}-\frac1k,\overline{h}+\frac1k]$.
\end{corollary}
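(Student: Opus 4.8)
The plan is a direct charging argument that rests entirely on the lemma immediately preceding the corollary (``for each non-standard bad square, an adjacent square in the $y$-order is a standard bad square'') together with Lemma~\ref{lem:ad_sq}; no new geometry is needed. First I would define a map $\phi$ sending each bad square $s$ with $y_i\in W$ to a standard bad square: if $s$ is itself a standard bad square, put $\phi(s)=s$; otherwise $s$ is a non-standard bad square, and by the preceding lemma at least one of its two neighbors in the $y$-order is a standard bad square, so let $\phi(s)$ be such a neighbor (breaking a tie arbitrarily if both neighbors qualify). The preceding lemma guarantees $\phi$ is well defined on all bad squares with $y_i\in W$.

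Second, I would locate the image of $\phi$ inside the enlarged window $W'=[\underline{h}-\tfrac1k,\overline{h}+\tfrac1k]$. If $\phi(s)=s$ this is immediate, since $y_i\in W\subseteq W'$. If $\phi(s)$ is a $y$-neighbor of $s$, then by the uniform-spacing assumption $y_{i+1}-y_i=\tfrac1k$ the $y$-coordinate of $\phi(s)$ differs from $y_i$ by exactly $\tfrac1k$; as $y_i\in[\underline{h},\overline{h}]$, this places $\phi(s)$ within $W'$. Hence every standard bad square that receives charge from a bad square in $W$ has its $y$-coordinate in $W'$.

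Third, I would bound the number of preimages of a fixed standard bad square $t$. Any bad square $s$ with $\phi(s)=t$ is either $t$ itself or a $y$-neighbor of $t$, and $t$ has at most two neighbors in the $y$-order (one above, one below), so $|\phi^{-1}(t)|\le 3$. Summing $|\phi^{-1}(t)|$ over all standard bad squares $t$ with $y$-coordinate in $W'$ — the only squares that can lie in the image of $\phi$ — then yields that the number of bad squares with $y_i\in W$ is at most three times the number of standard bad squares with $y$-coordinate in $W'$, which is exactly the claim.

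I do not expect a genuine obstacle: the corollary is the bookkeeping wrapper around the two preceding lemmas, and the only points requiring a little care are that $\phi$ is total (which is precisely what the preceding lemma provides, including for the extreme squares that have a single $y$-neighbor) and that ``adjacent in the $y$-order'' must be converted, via uniform spacing, into a $y$-displacement of exactly $\tfrac1k$ — this is what makes $W'$ (rather than a larger window) the correct enlargement.
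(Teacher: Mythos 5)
Your proposal is correct and is exactly the paper's argument: the paper proves this corollary with the single remark that, by the preceding lemma, each standard bad square is charged by at most three bad squares (itself and its two $y$-neighbors), which is precisely your map $\phi$ spelled out in detail. The only elaboration you add — that uniform spacing places the charged neighbor within the $1/k$-enlarged window $W'$ — is implicit in the paper and correctly justified.
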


It remains to count the number of standard bad squares.

\begin{lemma}\label{lem:few_type_2}
  For every vertical window $W=[\underline{h},\underline{h}+1]$ of a $1/k$-reasonable layout, there are at most $2(\log k+1)$ standard bad squares with
$y_i\in W$.
\end{lemma}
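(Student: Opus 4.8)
The plan is to exploit that $W$ has height $1$ and the strip has width $w\le 2$, so any two squares whose centroids both lie in $W$ — more generally, any two squares at vertical distance less than $1$ — overlap in \emph{both} coordinates. Hence, whenever a square $b$ is in front of a square $a$ and the two are this close vertically, $b$ covers exactly the corner of $a$ lying in $b$'s direction, and $a$ is bad precisely when the squares in front of $a$ occupy at least two of the four open quadrants around $a$. From this I would derive a clean characterization: the left vertical side of $a$ is fully covered if and only if, among the squares in front of $a$ with $x$-coordinate at most $x_a$, there is one above $a$ and one below $a$, and the lowest such square above $a$ and the highest such square below $a$ are at vertical distance at most $1$ (a short interval-covering argument along the line $x=x_a-\tfrac12$). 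Under uniform spacing $1/k$, these two ``witness'' squares are therefore within $k$ neighbours of $a$ in the $y$-order, and at least one of them is within $\lfloor k/2\rfloor$ neighbours. Since a square whose two vertical sides are both fully covered has visible perimeter at most $2$, contradicting $1/k$-reasonableness, every standard bad square has \emph{exactly} one fully covered vertical side, so the left-covered and the right-covered standard bad squares form two disjoint sets; by the left--right reflection symmetry (which preserves $1/k$-reasonableness and uniform spacing) it then suffices to bound the number of \emph{left}-covered standard bad squares with $y_i\in W$ by $\log k+1$.

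\textbf{Counting the left-covered standard bad squares.}
I would prove this by induction on the number $m$ of squares whose centroid lies in the (possibly shrinking) window, so that $m\le\lfloor k\rfloor+1$. Note that the bottom and top squares (in the $y$-order) can never be left-covered standard bad, since such a square needs both a square above it and a square below it in front; this already handles small base cases. For the inductive step the goal is a recursion of the form $h(m)\le h(\alpha m)+O(1)$ for a fixed $\alpha<1$, which yields $h(m)=O(\log m)=O(\log k)$ with the right constant. To obtain it I would pick a pivot — the frontmost square $\mu$ among those in the window, which covers a corner of every other square, or alternatively the frontmost left-covered standard bad square — and use the witness structure to argue that all but $O(1)$ of the left-covered standard bad squares are ``pinned'' inside a vertical sub-window containing at most $\alpha m$ of the squares, to which the induction hypothesis (applied to the induced sublayout, which remains $1/k$-reasonable and uniformly spaced) then applies. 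A useful auxiliary fact is that among left-covered standard bad squares that share the same extreme witness pair $(u,v)$, the horizontal order agrees with the stacking order — otherwise one of them would be a strictly closer witness for another, contradicting the extremality of $u$ or $v$; this should keep the family of ``pinning intervals'' $[y_v,y_u]$ sufficiently laminar for a balanced cut with only $O(1)$ crossings to exist, and Proposition~\ref{prop:bottom_must_be_leftright} should be used to exclude the ``wedged'' configurations in the case analysis that pins down the constant.

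\textbf{Main obstacle and technicalities.}
The real work is the inductive step: promoting the intuition ``most left-covered standard bad squares cluster in a narrow vertical band'' into a recursion whose depth is $O(\log k)$ rather than $O(k)$ — in particular, ruling out the apparent bad case in which every balanced horizontal cut leaves many left-covered standard bad squares on each side. I expect this to require genuinely combining the stacking order with the horizontal order, together with the facts that each witness lies strictly in front of, and strictly to one side of, the square it serves, and that the two witnesses of a square are at vertical distance at most $1$ from each other. Two secondary points also need care: a witness of a square with $y_i\in W$ may itself have $y$-coordinate up to $\tfrac12$ outside $W$, so the sublayouts in the induction must be taken over a slightly enlarged vertical range (or the relevant witnesses carried along explicitly); and ties in $x$-coordinates must be broken consistently with the stacking order so that the quadrant classification used in the characterization is well defined.
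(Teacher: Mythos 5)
There is a genuine gap: the core counting argument is missing, and you acknowledge as much yourself. Your reduction to left-covered standard bad squares via symmetry matches the paper, and your observation that such a square needs one witness above and one below (so a fully covered vertical side forces two squares in front on the same horizontal side) is consistent with the intended structure. But everything after that is a plan, not a proof. The proposed recursion $h(m)\le h(\alpha m)+O(1)$ is never established; you explicitly flag as the ``main obstacle'' the possibility that every balanced cut leaves many left-covered standard bad squares on both sides, and you offer no mechanism that rules this out. Moreover, even if such a recursion were proved, it would yield $O(\log k)$ with an unspecified constant rather than the stated bound of $\log k+1$ per side, so the quantitative claim of the lemma would not follow from the argument as sketched.

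The paper's proof needs no induction and no divide-and-conquer. It orders the left-covered standard bad squares $s_1,\ldots,s_\ell$ by decreasing $x$-coordinate, notes that they must be stacked with smaller $x$-coordinate in front (otherwise a third corner, hence a full horizontal side, of the one behind would be covered, killing its gap), and then tracks the visible portions $\sigma_i\le\lambda_i$ of the two horizontal sides of each $s_i$. Since the left side is covered and the right side contributes at most $1$, $1/k$-reasonableness forces $\sigma_i+\lambda_i\ge 1+1/k$; combined with the fact that each successive square must sit further left by at least the accumulated shift, this gives $\sigma_1\ge 1/k$, $\sigma_2\ge 2/k$, $\sigma_3\ge 4/k$, and in general $\sigma_\ell\ge 2^{\ell-1}/k$, which is at most $1$, so $\ell\le\log k+1$. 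This geometric-growth mechanism --- the quantity that doubles at each step --- is precisely the idea your proposal lacks, and without it (or an equivalent), the $O(\log k)$ bound does not follow.
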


\begin{proof}
  Let us fix the window. We count the standard bad squares with the left side covered, the overall bound follows by symmetry.

  Let $s_1,\ldots,s_{\ell}$ be these squares; see Figure~\ref{fig:standard_bad_squares} (left). They must be stacked according to $x$-coordinate, with squares of lower $x$-coordinate in front of squares with higher $x$-coordinate. Indeed, a square $s_i$ in front of a square $s_j$ with smaller $x$-coordinate would cover a third corner of $s_j$, and thus a full horizontal side, resulting in no gap (we are using here that the window height is $1$). The squares covering the left side of $s_i$ are to the left of $s_i$ and together cover all of $s_i$ in the left half of the strip.

\begin{figure}[htbp]
\centering
    \includegraphics{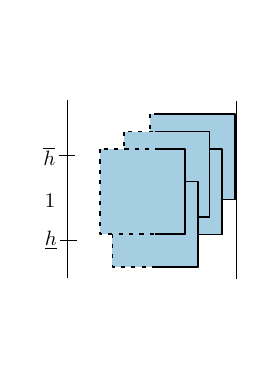} \quad \includegraphics{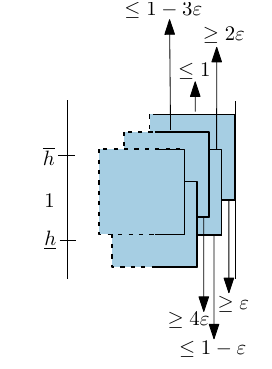}
\caption{Counting standard bad squares with centers in a vertical window of height $1$\label{fig:standard_bad_squares}.}
\end{figure}

Because the layout is $1/k$-reasonable, each $s_i$ has a part of each of its horizontal sides visible. They are of lengths $\sigma_i\leq \lambda_i\leq 1$ such that $\sigma_i+\lambda_i\geq 1 + 1/k$.  Suppose that the squares are ordered by decreasing $x$-coordinate. We show that the $\sigma_i$ increase exponentially with $i$.

We have $\lambda_1\leq 1$, hence $\sigma_1\geq \varepsilon :=1/k$; see Figure~\ref{fig:standard_bad_squares} (right). As a consequence, $\lambda_2\leq 1-\varepsilon$ (as $s_2$ is by at least $\varepsilon$ further to the left than $s_1$). Hence, $\sigma_2\geq 2\varepsilon$. This in turn means that $s_3$ is by at least $\epsilon+2\varepsilon$ further to the left than $s_1$, so $\lambda_3\leq 1-3\varepsilon$ and $\sigma_3\geq 4\varepsilon$.

Continuing in this fashion, we see that $\sigma_{\ell}\geq 2^{\ell-1}\varepsilon\leq 1$. This implies that $\ell-1\leq \log(1/\varepsilon)=\log k$.
\end{proof}

\begin{corollary}
In a reasonable layout, at most $6(\log k +1)$ squares out of any consecutive $k-1$ squares are bad squares.
\end{corollary}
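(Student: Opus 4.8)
The plan is to obtain this corollary as an immediate composition of Corollary~\ref{cor:standard_bad} and Lemma~\ref{lem:few_type_2}, with only a short window-size computation in between; no new geometric argument is needed. As in the run-up to Theorem~\ref{thm:main_uniform}, we may (and do) restrict to $1/k$-reasonable layouts, so that both of those results are available, and we keep assuming $k\geq 2$.

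First I would fix an arbitrary block $s_j,\ldots,s_{j+k-2}$ of $k-1$ consecutive squares in the $y$-order. By uniform spacing their centroids lie in the window $W=[y_j,\,y_{j+k-2}]\subseteq[\frac12,h-\frac12]$ of height $(k-2)/k<1$. Applying Corollary~\ref{cor:standard_bad} to $W$ bounds the number of bad squares among $s_j,\ldots,s_{j+k-2}$ by three times the number of standard bad squares with centroid in $W'=[y_j-\frac1k,\;y_{j+k-2}+\frac1k]$. The single arithmetic point is that $W'$ has height exactly $(k-2)/k+2/k=1$, so Lemma~\ref{lem:few_type_2} applies to $W'$ and bounds its standard bad squares by $2(\log k+1)$. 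Multiplying the two bounds gives at most $6(\log k+1)$ bad squares among the $k-1$, which is the claim.

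The only bit of care needed is that $W'$ may protrude beyond $[\frac12,h-\frac12]$ near the top or bottom of the strip; since no square has its centroid outside that interval, one may replace $W'$ by a height-$1$ window inside $[\frac12,h-\frac12]$ that still contains all the relevant standard bad squares, and Lemma~\ref{lem:few_type_2} then applies literally. I do not anticipate a real obstacle here: all the work is in the preceding lemmas, and this corollary merely records that $k-1$ uniformly spaced centroids, enlarged by $\frac1k$ on each side, occupy a window of height exactly $1$.
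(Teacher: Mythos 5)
Your proposal is correct and matches the paper's own argument: the paper likewise observes that the $k-1$ consecutive centers span a window of height $1-2/k$, enlarges it by $1/k$ on each side to a height-$1$ window, and composes Corollary~\ref{cor:standard_bad} with Lemma~\ref{lem:few_type_2} to get the bound $3\cdot 2(\log k+1)$. Your extra remark about clipping $W'$ to the strip near its boundary is a harmless refinement the paper leaves implicit.
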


\begin{proof}
  The centers of $k-1$ consecutive squares span a horizontal window of height $1-2/k$. Using Corollary~\ref{cor:standard_bad} and the previous lemma, the number of bad squares in this window is at most $3(2(\log k+1))$.
\end{proof}

Hence, by removing $O(\log k)$ squares per bundle of $k-1$ squares, we obtain a layout with no bad squares left (observe that no surviving square can turn bad by removing squares).
Such a layout turns out to have a rather rigid structure.

\begin{lemma}\label{lem:monotone} After removal of all bad squares from a $1/k$-reasonable layout, there is a unique top square (fully visible), and the stacking order is determined: monotone decreasing from the top square towards the highest as well as the lowest square.
\end{lemma}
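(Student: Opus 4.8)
\emph{Proof plan.}
The plan is to run the contrapositive of Lemma~\ref{lem:ad_sq} on the layout obtained after deleting all bad squares, and to combine it with the elementary fact that a sequence of distinct reals with no interior local minimum is unimodal.

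First I would record the relevant properties of the reduced layout $\A'$. It is still $1/k$-reasonable, since deleting squares only enlarges the visible boundary of the remaining squares, so the gap cannot decrease; it contains no bad square, as already observed; and any two squares that are consecutive in the $y$-order of $\A'$ still lie within vertical distance less than $1$, hence overlap. The last point is exactly where the earlier counting is used: the corollaries bound the number of bad squares among any $k-1$ consecutive squares by $O(\log k)$, so for $k$ large enough no run of deleted squares reaches length $k-1$, and consecutive survivors stay at vertical distance $O((\log k)/k)<1$ (for small $k$ the target Theorem~\ref{thm:main_uniform} is vacuous). With this in place, Lemma~\ref{lem:ad_sq} applies inside $\A'$: if a survivor had both of its $\A'$-neighbours in front of it, the lower neighbour would cover a bottom corner and the upper neighbour a top corner of it (using also that $x$-coordinates differ by at most $w-1\le 1$), so it would have two covered corners and be bad — a contradiction.

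Now index the $N$ surviving squares as $t_1,\dots,t_N$ in increasing $y$-order and let $\rho(j)$ be the stacking rank of $t_j$, larger meaning ``more in front''. The previous paragraph says exactly that $\rho$ has no interior local minimum: there is no $1<j<N$ with $\rho(j)<\rho(j-1)$ and $\rho(j)<\rho(j+1)$. I would then invoke the combinatorial fact that a sequence of distinct reals with no interior local minimum is unimodal — strictly increasing up to the position of its maximum, and strictly decreasing afterwards; indeed, otherwise there are $i<j<k$ with $\rho(i)>\rho(j)<\rho(k)$, and the minimiser of $\rho$ on $[i,k]$ is an interior local minimum. The maximum value of $\rho$ is $N$, attained by the front-most survivor, i.e.\ the top square; so $\rho$ decreases monotonically as one walks in $y$-order away from the top square, both upward toward the highest square and downward toward the lowest square, which is precisely the claimed structure. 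For the ``unique top square, fully visible'' part: the top square is front-most, hence fully visible; any other survivor $t_j$ is not the unique maximiser of $\rho$, so some neighbour $t_{j-1}$ or $t_{j+1}$ has a strictly larger rank, overlaps $t_j$ (by the closeness above) and lies in front of it, covering part of $t_j$ — so $t_j$ is not fully visible.

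The step I expect to be the real work is the first one: verifying that, after deleting \emph{all} bad squares, the surviving squares remain packed tightly enough in the $y$-direction for the geometric content of Lemma~\ref{lem:ad_sq} (a square sandwiched between two fronting neighbours loses two corners) to still apply. Everything after that is either the short fact about unimodal sequences or routine bookkeeping.
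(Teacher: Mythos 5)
Your proposal is correct, and it reaches the conclusion by a genuinely different mechanism than the paper. The paper classifies the surviving squares as \emph{up} or \emph{down} squares (upper resp.\ lower side fully visible, which every square with three visible corners must be) and proves a propagation step: if $s_i$ is a down square then the overlapping survivor $s_{i-1}$ below it must be behind it and is again a down square; symmetrically for up squares. Monotonicity of the stacking order in both directions away from any up-and-down (i.e.\ fully visible) square, and hence uniqueness of such a square, then follow directly. You instead extract a purely combinatorial statement --- the stacking rank, read in $y$-order, has no interior local minimum --- and finish with the elementary fact that a sequence of distinct reals without an interior local minimum is unimodal. Your geometric step is in fact lighter than the paper's Lemma~\ref{lem:ad_sq}: you only need that each of the two fronting neighbours covers one corner (a bottom and a top corner respectively), which already makes the square bad, so you avoid both the case analysis on the neighbours' $x$-coordinates and the appeal to Proposition~\ref{prop:bottom_must_be_leftright}. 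What the paper's route buys is a slightly more self-contained geometric picture (the up/down classification is reused implicitly in the proof of Theorem~\ref{thm:main_uniform}); what yours buys is a cleaner separation of the one geometric fact from the order-theoretic conclusion. Both proofs share, and correctly flag, the one quantitative prerequisite: consecutive survivors must still overlap vertically, which holds because any run of deleted bad squares has length $O(\log k)$ by Corollary~\ref{cor:standard_bad} and Lemma~\ref{lem:few_type_2}, so the argument is valid for $k$ above a constant --- the same implicit assumption the paper makes when it writes ``we have an overlap since we have removed only $O(\log k)$ squares in between.''
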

\begin{proof}
  A square with at least three visible corners (and only such squares remain) is called a \emph{down square} if the lower side is fully visible, and an \emph{up square} if the upper side is fully visible. A top square is both up and down.

  Now let the squares be indexed from lowest to highest. We claim that if $s_i$ is a down square, then $s_{i-1}$ is also a down square that is behind $s_i$. To see this, consider a down square $s_i$ and the overlapping square $s_{i-1}$ (we have an overlap since we have removed only $O(\log k)$ squares in between); see Figure~\ref{fig:top_down}.
It is clear that $s_{i-1}$ must be behind $s_i$, and this in turn implies that $s_i$ is also a down square.

\begin{figure}[b]
    \begin{minipage}[t]{.34\textwidth}
        \centering
        \includegraphics[scale=1]{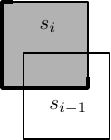}
        \caption{Proof of the bitone stacking order. The visible perimeter of $s_i$ is marked in bold.}
        \label{fig:top_down}
    \end{minipage}
    \hfill
    \begin{minipage}[t]{.62\textwidth}
        \centering
        \includegraphics[scale=1.5,page=5]{staircase_genstaircase} \quad \includegraphics[scale=1.5,page=6]{staircase_genstaircase}\quad
        \includegraphics[scale=1.2,page=2]{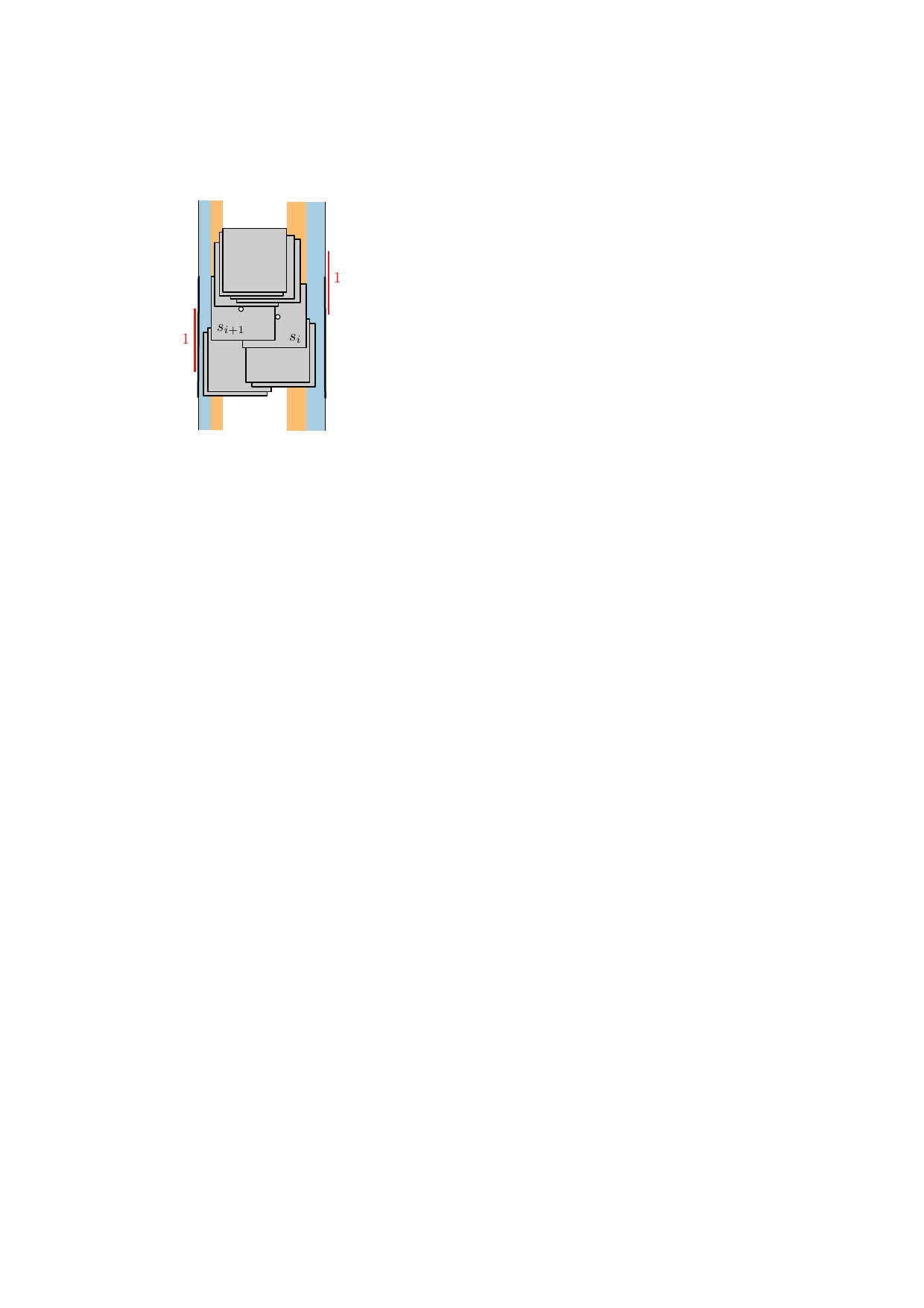}
        \caption{Two squares of different types in the subwindow.}
        \label{fig:uniform_proof}
    \end{minipage}
\end{figure}

 A symmetric statement holds for up squares. Hence, starting from \emph{any} top square, we can go both higher and lower, decreasing stacking height. In particular, we can never encounter another top square.
\end{proof}

We now proceed to the proof of Theorem~\ref{thm:main_uniform}, showing that, under uniform spacing, we cannot asymptotically beat the gap of the zigzag layout.

\begin{proof}[Proof of Theorem~\ref{thm:main_uniform}]
  We start with the layout obtained after removing all bad squares, as in Lemma~\ref{lem:monotone}. W.l.o.g.\ we assume that the majority of squares is below the top square, and we disregard all squares above. The stacking order then coincides with the $y$-order.

We now consider a window of height $4$. Each square with center in that window is either left or right of the next higher square, and based on this, we call it type L, or type R. We focus on the middle subwindow of height $2$. If all squares with centers in this subwindow have the same type, we have a proper staircase of $2k-O(\log k)$ squares; see Figure~\ref{fig:uniform_proof} (left).

Except $O(\log k)$ of them (the ones directly below a removed bad square), all squares have $y$-gap $1/k$. Let $\varepsilon$ be the minimum $x$-gap among these squares. Then the layout has gap at most $1/k+\varepsilon$. But we know that the sum of all $x$-gaps is at most $1$ (they do not overlap and ``live'' outside the top square), so the minimum $x$-gap is  $1/(2k-O(\log k))$ which proves the theorem in this case.

The other case is that there are two consecutive squares of different types with centers in the subwindow, see Figure~\ref{fig:uniform_proof} (middle). In this case, we have a generalized staircase. We let $s_{i}$ be the lower one and $s_{i'}$ the higher of the two squares, and we zoom in on the situation; see Figure~\ref{fig:uniform_proof} (right).

Both $s_i$ and $s_{i'}$ stick out of the squares $s_j$, $j>i'$, that are up to $1$ higher than $s_i$ (otherwise, they cannot have $3$ visible corners). On the other hand, the squares $s_j$, $j<i$, that are up to $1$ lower than $s_{i'}$ stick out of both $s_i$ and $s_{i'}$. It follows that---as in the previous case---the $x$-gaps of all involved squares live outside of the top square among them, and they do not overlap (i.e.~they are disjoint). More specifically, the $x$-gaps of the squares above $s_i$ live in the orange regions in Figure~\ref{fig:uniform_proof}, while the $x$-gaps of the squares below $s_{i'}$ live in the blue regions.

In total, we again have $2k-O(\log k)$ squares with a $y$-gap of $1/k$ within the surrounding window of height $4$, so the minimum $x$-gap is $1/(2k-O(\log k))$.
\end{proof}

\section{Conclusion}
\label{sec:conclusion}

We initiated the algorithmic study of optimizing the visibility of overlapping symbols by finding both a suitable drawing order and a limited displacement. This novel setting leads to various interesting and challenging problems. In this paper, we focused solely on unit squares, presented structural insights, as well as several intricate approximation algorithms.

We are curious if the upper bound from Theorem~\ref{thm:main_uniform} can be improved to one where the $O(\log k)$ term is replaced with a constant. In our approach we derive the bound by eliminating all the \emph{bad} squares from the layout before estimating the gap. Hence, knowing that (in a vertical window of mutually intersecting squares) the number of bad squares can be as large as $\Theta(\log k)$ a radically new approach would be needed to achieve such bound improvement.

It is natural to wonder if the stacking order of every optimal solution follows its $y$-order.
However, this is not always the case,
refer to Figure~\ref{fig:difference}: the optimal layout (leftmost figure) is better than the best layout when the stacking order follows the $y$-order or the inverse $y$-order (second and third figures), which is better than the best layout among all staircases (rightmost figure).

\begin{figure}[t]
  \centering
    \includegraphics[width=\textwidth,page=1]{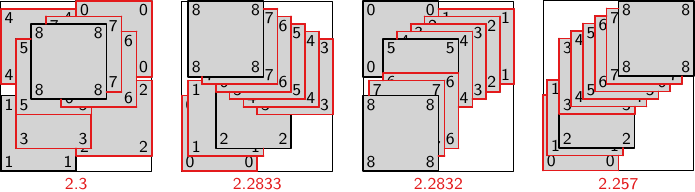}
\caption{Four layouts for $y = \{0.5, 0.7, 0.8, 1.25, 1.35, 1.45, 1.55, 1.65, 1.75\}$ in a strip of width~2. The minimum visible perimeter is indicated below each layout, with the red squares having exactly this visible perimeter. The stacking order is given by the numbers in the corners of each square. From left to right: optimal layout; optimal layout with a stacking order matching the $y$-order; optimal layout with a stacking order matching the inverse $y$-order; optimal staircase. These layouts were computed via (I)LPs using a difference of $0.001$ to turn strict inequalities into non-strict inequalities. \label{fig:difference}}
\end{figure}

Our results assume a strip of width at most 2, as well as distinct $y$-coordinates. While we can apply our constructions to the more general case, they may be of inferior quality. We leave to future work to establish whether and how these assumptions can be lifted and to determine the computational complexity of the problems in this area.

An interesting and practically relevant scenario for future work are rectangular symbols.
Our algorithms (constructions of layouts) can also be used for this case and yield results of high quality (see Figure~\ref{fig:vac2018}).
However, doing so loses the quality guarantees that we prove for the square case, since the resulting rectangle layouts will not optimize visible perimeter, but a variant of this measure in which horizontal and vertical visible edges have different weights.
Even more challenging are settings with differently sized symbols. We leave these question to future work.

\end{document}